\newcommand\C{\mathcal{C}}
\newcommand\F{\mathbb{F}}
\newcommand\s{\mathbf{s}}
\newcommand\0{\mathbf{0}}
\renewcommand{\H}{\mathcal{H}}
\newcommand{\X}{\mathcal{X}}
\newcommand{\Z}{\mathcal{Z}}
\newtheorem{mytheorem}{Theorem}[section]
\newtheorem{mylemma}{Lemma}[section]
\newtheorem{mycorollary}{Corollary}[section]
\newtheorem{myclaim}{Claim}[section]
\newtheorem{mydefinition}{Definition}[section]
\DeclareMathOperator{\im}{im}
\DeclareMathOperator{\vect}{vec}
\renewcommand{\[}{\begin{equation}}
\renewcommand{\]}{\end{equation}}
\theoremstyle{definition}
\renewcommand\vec{\mathbf}
\newcommand{\wt}{\mathrm{wt}}
\newcommand{\I}{\mathbbm{1}}
\newcommand\aug{\fboxsep=-\fboxrule\!\!\!\fbox{\strut}\!\!\!}
\newcommand{\cyan}[1]{\textcolor{cyan}{#1}}
\newcommand{\DL}[1]{\cyan{[DL: #1]}}
\renewcommand{\DL}[1]{}
\begin{document}

\title{Bias-tailored single-shot quantum LDPC codes}

\author{Shixin Wu}
\affiliation{Department of Electrical \& Computer  Engineering,  University of Southern California, Los Angeles, California 90089, USA}
\affiliation{Center for Quantum Information Science \& Technology, University of
Southern California, Los Angeles, CA 90089, USA}
\author{Todd A. Brun}
\affiliation{Department of Electrical \& Computer  Engineering,  University of Southern California, Los Angeles, California 90089, USA}
\affiliation{Center for Quantum Information Science \& Technology, University of
Southern California, Los Angeles, CA 90089, USA}
\affiliation{Department of Physics \& Astronomy, University of Southern California,
Los Angeles, California 90089, USA}
\affiliation{Department of Computer Science, University of Southern California,
Los Angeles, California 90089, USA}
\author{Daniel A. Lidar}
\affiliation{Department of Electrical \& Computer  Engineering,  University of Southern California, Los Angeles, California 90089, USA}
\affiliation{Center for Quantum Information Science \& Technology, University of
Southern California, Los Angeles, CA 90089, USA}
\affiliation{Quantum Elements, Inc., Thousand Oaks, CA}
\affiliation{Department of Physics \& Astronomy, University of Southern California,
Los Angeles, California 90089, USA}
\affiliation{Department of Chemistry, University of Southern California,
Los Angeles, California 90089, USA}

\begin{abstract}
Quantum hardware rarely suffers equal amounts of bit-flip ($X$) and phase-flip ($Z$) errors; one type is often much more common than the other.  A code that is ``bias-tailored'' can exploit this imbalance, lowering the fault-tolerance overhead.  A complementary idea, called ``single-shot'' error correction, aims to recover from data errors and noisy measurements in a single round of stabilizer readout, avoiding slow repetition cycles.  In this work we combine these two ideas and build a hierarchy of new quantum codes.
The full construction starts from the syndrome-encoded hypergraph product code and then tailors it to the dominant error type.  The resulting code keeps the single-shot guarantee for every noise model while boosting the threshold whenever $X$ and $Z$ errors are asymmetric.
By removing carefully chosen blocks of stabilizers we obtain two trimmed variants.  The first, called the simplified code, cuts the physical-qubit count by $1/6$ and halves the number of stabilizer measurements, yet its minimum distance grows quadratically compared to the standard design and its biased noise threshold is unchanged.  The second, called the reduced code, achieves the same hardware savings but trades away single-shot protection for purely $X$ or purely $Z$ noise; instead it remains single-shot under balanced, or depolarizing, noise.  In settings where strongly biased noise is likely, either trimmed code offers a less resource-intensive alternative to the full construction.
As a concrete illustration, we lift the two-dimensional XZZX surface code to a three-dimensional cubic lattice and show that this ``3D XZZX'' code is an explicit member of the simplified family.
Taken together, these bias-tailored single-shot codes provide an adjustable set of code design alternatives, allowing tradeoffs between hardware overhead and noise types.
\end{abstract}

\maketitle

\section{Introduction}

The type of noise that afflicts quantum computers depends heavily on details such as architecture and qubit modality. In practice, physical qubits are more likely to be subject to biased noise where one type of error is stronger than the others. For example, in some superconducting qubit architectures $Z$ noise is stronger than $X$ noise by several orders of magnitude~\cite{aliferis2009fault,lescanne2020exponential, chamberland2022building}. 

\emph{Bias-tailoring quantum codes}.---
Most quantum error correction (QEC) codes~\cite{shor1995scheme,steane1996error,Calderbank:98,Lidar-Brun:book} were originally designed and evaluated without explicitly accounting for the possibility of noise bias. Although this is a historically common assumption, it leads to suboptimal performance compared to codes that exploit noise asymmetry~\cite{Aliferis:2008aa}. A \emph{bias-tailored} quantum code is designed under the assumption that noise is biased. It was shown in~\cite{puri2020bias} that, if implemented with bias-preserving two-qubit gates, bias-tailored codes can reduce the overhead required for fault-tolerant quantum computation. 

The XZZX surface code~\cite{bonilla2021xzzx} is a well-known example of a bias-tailored quantum code. It is constructed from a standard surface code via Hadamard rotation, which modifies local stabilizers. The XZZX surface code has a much higher threshold (50\%) under infinitely biased noise than the standard surface code (10.9\%)~\cite{bombin2012strong}; this high threshold is achieved because the XZZX code reduces to a set of decoupled repetition codes under infinitely biased noise.

Many efforts have been made to generalize the XZZX surface code~\cite{roffe2023bias,dua2022clifford, xu2023tailored, huang2022tailoring}. This work extends from the construction given by Roffe \emph{et al}.~\cite{roffe2023bias}, which generalized the bias-tailoring procedure by applying a Hadamard gate to the latter half of every stabilizer generator, a procedure they named ``Hadamard rotation", and which they applied to hypergraph product codes~\cite{Tillich13} and lifted-product codes~\cite{panteleev2021quantum} as examples. Both quantum codes are constructed from two copies of classical codes. Since these codes represent two major subfamilies of quantum low-density parity-check (qLDPC) codes~\cite{Breuckmann21}, the ones derived from applying Hadamard rotation to them are termed \emph{bias-tailored qLDPC codes}. Under infinitely biased noise, bias-tailored qLDPC codes reduce to multiple disjoint pieces of the classical codes that make up the qLDPC codes. In the XZZX case, these classical codes are repetition codes. 

\emph{Single-shot error correction}.---
QEC codes are vulnerable to noisy measurement; if the stabilizer generator measurements yield incorrect results, the decoder is less likely to successfully correct the errors that took place. In reality, measurement results are almost always unreliable. The standard approach to this problem is to repeat the same measurement multiple times and perform a majority vote on the results~\cite{shor1996fault}. However, this approach is susceptible to error accumulation during repeated measurements. As an alternative, Bombin proposed \emph{single-shot} fault-tolerant QEC, in which only a single round of noisy measurement is sufficient for a decoder to infer the required corrections~\cite{bombin2015single}. 

Studies in single-shot QEC revolve around two models of measurement errors: stochastic and adversarial noise. Under stochastic noise, a randomly selected constant fraction of qubits is corrupted by noise. Single-shot QEC in this setting requires QEC codes with additional structure. An example of such a structure is the high expansion of the associated factor graphs of these codes, first introduced by Leverrier \emph{et al}. in~\cite{leverrier2015quantum}, and later used to construct asymptotically good qLDPC codes~\cite{panteleev2022asymptotically,leverrier2022quantum,lin2022good}. The authors of~\cite{fawzi2020constant} realized single-shot QEC for quantum expander codes, which was later extended to quantum Tanner codes~\cite{gu2023single}. Under adversarial noise, single-shot QEC can be realized with redundant single-shot stabilizer generator measurements~\cite{ashikhmin2020quantum,delfosse2021beyond}. In ~\cite{campbell2019theory}, Campbell generalized the construction of quantum codes capable of single-shot QEC using homological products, first introduced by Freedman and Hastings~\cite{freedman2013quantumsystemsnonkhyperfinitecomplexes}.

\emph{This work}.---
Here, we combine bias-tailoring with single-shot QEC. Our construction ensures that single-shot QEC is possible under biased noise. We call the resultant code the \emph{bias-tailored syndrome-encoded hypergraph product} (BSH) code. Additionally, 
we propose two smaller, bias-tailored, versions of Campbell's homological products construction, which we call ``simplified'' and ``reduced''.
These codes achieve the same performance as Campbell's construction under infinitely biased or unbiased (depolarizing) noise, respectively. 
We also give a concrete example of the simplified version of Campbell's construction and show that it is equivalent to the three-dimensional XZZX code. 

This paper is structured as follows. In \cref{sec:2}, we introduce the preliminaries of coding theory and single-shot error correction. In \cref{sec:3}, we first introduce the homological product as a method to construct quantum codes, use bias-tailoring to create standard bias-tailored syndrome-encoded hypergraph product (BSH) codes, and show the single-shot properties due to bias-tailoring. Then, we give the construction of reduced bias-tailored syndrome-encoded hypergraph product (RSH) codes. In \cref{sec:4}, we give an explicit example of RSH codes, called the three-dimensional XZZX code, and discuss its properties. We conclude in \cref{sec:conc} and outline directions for future work.

\section{Preliminaries}

In this section we first briefly review the essentials of stabilizer codes and chain complexes. This material also serves to establish notation we use in the rest of this work. We then discuss biased noise models, and single-shot error correction.

\subsection{Classical and Stabilizer Codes}
\label{sec:2}

A classical linear code $\C$ is defined by a binary $m \times n$ parity-check matrix $H: \F_2^n \to \F_2^m$. $H$ has rank $r\le m$. The codespace of the code is the nullspace (kernel) of $H$. A code is characterized by three parameters: length $n$, number of logical bits $k$, and distance $d$. 
The number of logical bits $k$ is the dimension of the nullspace of $H$, which is $n-r$ by the rank-nullity theorem. The distance $d$ is the minimum Hamming weight [$\wt(\cdot)$] of a codeword, i.e., $d= \min_{x\in\mathrm{ker} H\setminus \{0\}}\wt(x)$. We overload these parameters with functions $n(\cdot)$, $k(\cdot)$, and $d(\cdot)$ that return the corresponding parameters. Their arguments can be either the code $\C$ or its defining parity-check matrix $H$. Additionally, we define the \emph{transpose code} $\C^T$ of a linear code $\C$ whose parity-check matrix is $H^T$. Similarly, it has parameters (functions) $n^T = m$, $k^T = m-r$ and $d^T = \min_{y\in\mathrm{ker} H^T\setminus \{0\}}\wt(y)$, where the $T$ superscript is used suggestively to denote the transpose code.\footnote{The transpose code differs from the dual code $\mathcal{C}^\perp$, which is generated by the rows of $H$ and has dimension $n-k$.}

The $1$-qubit Pauli group $\mathcal{G}\equiv \mathcal{G}^1$ is generated by $\{I,X,Y,Z\}$, where \[I=\begin{bmatrix} 1 & 0 \\ 0 & 1 \end{bmatrix},\ X=\begin{bmatrix} 0 & 1 \\ 1 & 0 \end{bmatrix},\ Y=\begin{bmatrix} 0 & -i \\ i & 0 \end{bmatrix},\ Z=\begin{bmatrix} 1 & 0 \\ 0 & -1 \end{bmatrix}.\]
Let $\mathcal{G}^n = (\mathcal{G}^1)^{\otimes n}$ denote the $n$-qubit Pauli group. 
The codespace of an $n$-qubit stabilizer code is the joint $+1$ eigenspace of a stabilizer group, where a stabilizer group $\mathcal{S}$ is an abelian subgroup of $\mathcal{G}^n$ that excludes $-I$ (we present the generalization to qudits in \cref{sec:qudit}). 
An error operator $E \in \mathcal{G}^n$ either commutes or anticommutes with each element $S_i \in \mathcal{S}$. Suppose the uncorrupted logical state is $\ket{\psi}$. If it commutes, $E\ket{\psi}=ES_i\ket{\psi}=S_iE\ket{\psi}$, which means $E\ket{\psi}$ is a $+1$ eigenstate of $S_i$. If this is true for all $S_i \in \mathcal{S}$, $E\ket{\psi}$ is in the codespace, and $E$ acts trivially on the logical state. Otherwise, there exists an $S_i$ such that $E\ket{\psi}=ES_i\ket{\psi}=-S_iE\ket{\psi}$, which means $E\ket{\psi}$ is a $-1$ eigenstate of $S_i$, and $E$ is a detectable error. The eigenvalues can be obtained by measuring the stabilizer generators, and the measurement results in $\{1,0\}^m$ are called the error syndrome, where $m$ is the number of stabilizer generators. In general, let $\sigma: \mathcal{G}^n \to \{0,1\}^m$ be the \emph{syndrome map}. If $s=\sigma(E)$, $s_i=1$ ($s_i=0$) if $E$ anticommutes (commutes) with the $i$'th stabilizer generator.

Note that $\sigma$ is a homomorphism: $\sigma(E_1 E_2)=\sigma(E_1)+\sigma(E_2)$ for $E_1,E_2 \in \mathcal{G}^n$. Suppose $E_2=E_1 S_i$ and $S_i \in \mathcal{S}$, then \[\sigma(E_2)=\sigma(E_1  S_i)=\sigma(E_1)+\sigma(S_i)=\sigma(E_1),\]
where the last equality follows because every stabilizer commutes with every stabilizer generator. We say that $E_1$ and $E_2$ are equivalent up to a stabilizer. As our following discussion focuses on the weight of Pauli operators, we define the reduced weights of equivalent Pauli operators up to stabilizers: 
\[
|E|^{\mathrm{red}}
  := \min\bigl\{\wt(F):\sigma(E)=\sigma(F), F \in \mathcal{G}^n\},
\label{eq:red}
\]
which is the smallest weight of the equivalent Pauli operators. 

The logical operators are the normalizers of $\mathcal{S}$ in $\mathcal{G}^n$. A logical operator commutes with all the stabilizers but is not in $\mathcal{S}$, so it acts non-trivially on a logical state. The distance of a stabilizer code is given by the minimum weight of its logical operators. 

Calderbank-Shor-Steane (CSS) codes~\cite{calderbank1996good,steane1996error,steane1997active} are well-known examples of stabilizer codes. Each stabilizer generator of a CSS code has either only $X$ or only $Z$ operators and is correspondingly called ``$X$-type'' or ``$Z$-type''. 

We use the binary vector representation for members of the Pauli group. For the generators of $\mathcal{G}$, 
\[\label{eq:rep} I \mapsto [0|0],\ X\mapsto[1|0],\ Y\mapsto [1|1],\ Z\mapsto[0|1].\]
Then, if $g \in \mathcal{G}^n$, $g \to [g_x|g_z]$, where $g_x, g_z \in \F_2^n$. 
If the $i$'th bit of $g_x$ is 1, $g$ applies an $X$ gate to the $i$'th qubit; if the $i$'th bit of $g_z$ is 1, $g$ applies a $Z$ gate to the $i$'th qubit; if the $i$'th bits of $g_x$ and $g_z$ are both 1, $g$ applies a $Y$ gate to the $i$'th qubit; if the $i$'th bit of $g_x$ and $g_z$ are both 0, $g$ acts trivially on the $i$'th qubit.
Hence, an $n$-qubit stabilizer code subject to $m$ stabilizer generators can be defined by a $m \times 2n$ binary matrix $H_Q=[H_X|H_Z]$, where $H_X, H_Z: \F_2^n \to \F_2^m$. An error $e_Q$  maps to $[e_X|e_Z] \in \F_2^{2n}$. When computing the syndrome we use the symplectic form, so the syndrome due to $e_Q$ is 
\[\label{eq:s_Q} s_Q = [H_X\mid H_Z]  \begin{bmatrix}e_Z \\ e_X\end{bmatrix} = H_Xe_Z + H_Ze_X\quad (\!\!\!\!\!\!\mod 2).\]
Because every stabilizer generator of CSS code contains $X$ or $Z$ operators, its binary matrix is a block matrix whose diagonal submatrices are zero matrices if we choose to write the $Z$-type stabilizer generators before the $X$-type:
\[
H_{\text{CSS}}=[H_X|H_Z]=\begin{bmatrix}
    0 & \aug & \mathcal{Z} \\
    \mathcal{X} & \aug & 0
\end{bmatrix}.
\]

The \emph{stabilizer weight} $w$ is defined as the maximum weight of any row and any column in the parity check matrices $\mathcal{X}$ and $\mathcal{Z}$. 
A family of codes is said to be \emph{low density parity check} (LDPC) if and only if $w = O(1)$ in the limit $n\to\infty$.\footnote{More precisely, when the two constituent classical codes are LDPC, every row (and column) of $H_X$ or $H_Z$ is bounded by a constant $w_0$. Concatenating the blocks into the full check matrix therefore increases those bounds by at most a factor of two, so all rows and columns have weight $\le 2w_0 = O(1)$. We will simply write $w=O(1)$ and regard this inessential factor $2$ duplication as implicit when referring to an LDPC family in the asymptotic limit $n\to\infty$.} 

The \emph{number of stabilizer checks} is the row count of $H_{\text{CSS}}$. This is the number of stabilizer measurements performed in every cycle of error correction. Note that this number is, in general, larger than the number of generators $r=n-k$ of a stabilizer code, which is the minimum number of measurements needed to detect all the errors the code can detect.
This redundancy is important for fault-tolerance to measurement noise in, e.g., single-shot protocols (discussed below), which need extra parity information to diagnose which measurement outcomes are faulty.

To check if the stabilizers commute as required for code validity, it is sufficient to check if $\mathcal{X}\mathcal{Z}^T=0$. The number of logical qubits can now be written as 
\[
\label{eq:k}
k = n - \mathrm{rank}\mathcal{Z} - \mathrm{rank}\mathcal{X} ,
\]
and the code distance as $d = \min\{d^X, d^Z\}$, where
\begin{equation}
 \label{eq:css_distance_defs} 
\begin{aligned}
d^X &= \min\{|v| \colon v \in \ker \mathcal{X} \setminus \mathrm{im}(\mathcal{Z}^T)\}, \\
d^Z &= \min\{|v| \colon v \in \ker \mathcal{Z} \setminus \mathrm{im}(\mathcal{X}^T)\} ,
\end{aligned}
\end{equation}
where $|v| = \wt(v)$ denotes the Hamming weight of the binary vector $v$, and $\ker(A)$ and $\mathrm{im}(A)$ denote the kernel and image, respectively, of the linear map $A$.

A popular method to construct a CSS code from two arbitrary linear codes without the orthogonality requirement of their parity-check matrices is to use the \emph{hypergraph product} proposed in~\cite{Tillich13}. 
The CSS code resulting from a hypergraph product is called a hypergraph product (HGP) code. 

The hypergraph product of two classical LDPC codes 
$[n_i,k_i,d_i]$ for $i=1,2$ 
yields an $[[n',k',d']]$ qLDPC CSS code whose parameters are
\[
  n'  =  n_1 n_2  +  m_1 m_2,\ \
  k'  =  k_1 k_2  +  k_1^{ \perp} k_2^{ \perp},\ \ d' \ge  \min\{d_1,d_2\},
\]
where $m_i$ is the number of parity checks in the $i$'th classical code and
$k_i^{\perp} = \dim\ker H_i^{\top}$ is the dimension of the dual
code.  
If each classical parity-check matrix $H_i$ has full row rank
($m_i = n_i - k_i$), then $k_i^{\perp}=0$ and the logical‐qubit count simplifies to $k' = k_1 k_2$.
In many standard constructions one also has $d' = \min\{d_1,d_2\}$.

We can use this formalism to characterize noise bias. On one end of the spectrum are the cases of pure $X$ or $Z$ noise. 
If there are only $Z$ errors, \cref{eq:s_Q} reduces to $s_Q=H_Xe_Z$, which is independent of $H_Z$. In this case, it suffices to require properties such as single-shot QEC and a pseudothreshold from either $H_X$ or $H_Z$ separately instead of both simultaneously. 
On the other end of the spectrum is depolarizing noise, where $X$ and $Z$ happen with the same probability, and we must consider $H_X$ and $H_Z$ simultaneously. In practice, we expect the noise bias to fall between these two extremes.

\subsection{Chain complexes and error correction codes}
\label{sec:3}

We review the construction of chain complexes and error correction codes, following \cite{freedman2013quantumsystemsnonkhyperfinitecomplexes,10.1145/2591796.2591870,hastings2016quantum,Breuckmann21}.

A length-$n$ \emph{chain complex} $C$ is a sequence of $n$ Abelian groups (``cells'') $\{C_i\}_{i\in[1,n]}$ 
and \emph{boundary operators} $\partial_i: C_{i} \to C_{i-1}$ that satisfy $\partial_i\partial_{i+1}=0$ (the boundary of a boundary is zero):
\[
  0 \xrightarrow{ \partial_{n+1}=0 } C_n \xrightarrow{ \partial_n }
  C_{n-1} \xrightarrow{ \partial_{n-1} } \cdots
  \xrightarrow{ \partial_1 } C_0 \xrightarrow{ \partial_{0}=0 } 0 ,
\]
We call $Z_i:=\ker \partial_i \subseteq C_i$ the \emph{$i$'th cycle} of $C$ and $B_i:=\im \partial_{i+1} \subseteq C_i$ the \emph{$i$'th boundary} of $C$. Note that $\partial_i\partial_{i+1}=0$ implies $B_i \subseteq Z_i$, so every boundary is a cycle, but not every cycle is a boundary.\footnote{To see this, assume $x\in B_i$; then by definition $x$ is a boundary and $x=\partial_{i+1}(y)$ for some $y\in C_{i+1}$. Then $\partial_i(x) = \partial_i\partial_{i+1}(y)=0$, so $x\in\ker\partial_i = Z_i$. It follows that $B_i\subseteq Z_i$.} The quotient group $\H_i:=Z_i/B_i$ is referred to as the \emph{i'th homology group}.

These concepts all have duals. A length-$n$ \emph{cochain} complex is the dual sequence
\[
  0 \xleftarrow{ \partial^{ n+1}=0 } C_n \xleftarrow{ \partial^{ n} }
  C_{n-1} \xleftarrow{ \partial^{ n-1} } \cdots
  \xleftarrow{ \partial^{1} } C_0 \xleftarrow{ \partial^{ 0}=0 } 0 ,
\]
with $\partial^{i}\partial^{i-1}=0$ and \emph{coboundary operators} $\partial^i:C_{i-1} \to C_{i}$ which are the dual operators of $\partial_i$.  We call $Z^i:=\ker \partial^{i+1} \subseteq C_i$ the \emph{i'th cocycle} of $C$ and $B^i:=\im \partial^{i} \subseteq C_i$ the \emph{i'th coboundary} of $C$. The \emph{i'th cohomology group} $\H^i$ is defined as $Z^i/B^i$.

Chain complexes can be used to define classical and quantum codes. A linear code defined by its parity-check matrix $H$ can be represented by a length-1 chain complex 
\[
X:=X_1 \xrightarrow{\partial_1[X]=H} X_0,
\]
where $X_1$ is identified as bits and $X_0$ as syndromes. Its first homology group $\H_1=\ker(H)/\mathbf{0}=\ker{H}$ is the space of codewords, and its zeroth homology group $\H_0=\F_2^m/\im(H)=\ker(H^T)$ is the space of the codewords of its transpose code. Then, $k(H)=\dim(\H_1)$ and $k(H^T)=\dim(\H_0)$. Formally, $\dim(\H_i)$ of chain complex $X$ is defined to be the $i$'th \emph{Betti number} $b_i(X)$, which is the number of $i$ dimensional holes in the space defined by $X$. For example, $b_1(X)=k(H)$. 

To connect with CSS codes, we encode the $Z$-type stabilizers in $\partial_2$ and the $X$-type stabilizers in $\partial_1$. We can represent $\partial_2$ and $\partial_1$ as matrices; $\partial_2=\Z^T$ has $n$ rows and has one column per $Z$-type stabilizer, while $\partial_1=\X$ has $n$ columns and one row per $X$-type stabilizer. In the general qudit case (\cref{sec:qudit}) the entries of these matrices are over $\mathbb{F}_p$, with $p\ge 2$ ($p=2$ is the qubit case). The entry in the $i$'th row and $j$'th column indicates which power of $Z$ appears in the $j$'th stabilizer; the entry in the $i$'th row and $j$'th column indicates which power of $X_j$ appears in the $i$'th stabilizer. The columns of $\partial_2$ are linearly independent, as are the rows of $\partial_1$, which ensures the linear independence of all stabilizers. 

The stabilizers must all commute with each other. Any pair of $Z$-type or $X$-type stabilizers trivially commute. The requirement that the $Z$-type stabilizers commute with the $X$-type stabilizers can be expressed as
\begin{equation} 
\label{eq:css_condition} 
\partial_1 \partial_2 = \X \Z^T = 0\qquad (\!\!\!\!\!\!\mod p).
\end{equation}
This requirement is equivalent to there being a chain complex
\[C_2\xrightarrow{\partial_2=\Z^T} C_1 \xrightarrow{\partial_1=\X} C_0,\]where $C_2, C_1, C_0$ are vector spaces over $\mathbb{F}_p$. The basis elements of $C_2$ are in one-to-one correspondence with $Z$-type stabilizers, those of $C_1$ with qudits, and those of $C_0$ with $X$-type stabilizers. Thus, we have $\dim(C_1) = n$.

The number of encoded qudits is given by the first Betti number, which is equal to 
\[
k = n - \dim(\partial_2) - \dim(\partial_1).
\]
Note the similarity to \cref{eq:k}.\footnote{We use the algebraic topology convention that $\dim(\partial_i) =  \text{rank}(\partial_i) =  \dim(\text{im}(\partial_i))$ for the linear map $\partial_i$.}

Furthermore, it is also possible to identify logical operators of a CSS code from a chain complex. The first homology group $\H_1=\ker(H_X)/\im(H_Z^T)$ is the space of logical $Z$ operators, and the first cohomology group $\H^1=\ker(H_Z)/\im(H_X^T)$ is the space of logical $X$ operators. 
Their dimensions $\dim(\H_1)$ and $\dim(\H^1)$ are the number of independent $Z$ and $X$ logical operators. Because we are working over a field, the universal coefficient theorem~\cite{Hatcher2002}
implies that $\H_1 \cong \H^1$ in a length-2 chain complex, so $\dim(\H_1)=\dim(\H^1)$, which means that a CSS code has the same number of logical $Z$ and $X$ operators. Therefore, if a length-2 chain complex represents a CSS code, its first Betti number is the number of logical $Z$ or $X$ operators.\footnote{Betti numbers are usually given with certain generating functions which are  Poincar\'{e} polynomials when their corresponding chain complexes are finite. 
An example is a 2-torus whose generating function is $(1+x)^2=1+2x+x^2$. Its first Betti number is the linear coefficient 2, which is equal to the number of logical qubits encoded by a toric code.}

A length-2 chain complex can also represent a classical linear code called the \emph{syndrome-encoded} (SE) code. An SE code is a classical code encoded by the parity-check matrix $H$, and the syndromes are encoded in the parity-check matrix $H_s$ with $H_sH=0$. On a length-2 chain complex, we identify the boundary operators with $\partial_2=H$ and $\partial_1=H_s$; the cells are identified such that $C_2$ is the bits, $C_1$ is the syndromes, and $C_0$ is the ``syndromes of syndromes''. If the syndromes are produced by $H$, we use ``syndromes of syndromes'' to correct them. We call the distance of the code defined by $H_s$ \emph{single-shot distance} $d_s$. Hence, an SE code has four parameters $[n,k,d,d_s]$. 

Two chain complexes can be combined via the \emph{tensor product}.\footnote{The usual tensor product of vector spaces encountered in quantum mechanics is the same basic algebraic construction used when forming the homological product of chain complexes. More specifically, one can take two chain complexes
$X \colon X_r \to X_{r-1} \to \dots \to X_0$ and $Y \colon Y_s \to Y_{s-1} \to \dots \to Y_0$, and form their (graded) tensor product
$X \otimes Y \colon \quad \bigoplus_{i+j=l} (X_i \otimes Y_j) \xrightarrow{\partial_l} \bigoplus_{i'+j'=l-1} (X_{i'} \otimes Y_{j'}) \xrightarrow{\partial_{l-1}} \dotsb$.
CSS codes come from a 2-step chain complex. The homological product of two CSS codes is just the $l=1$ part of the full graded tensor product of those two 2-step complexes.} 
As chain complexes represent codes, the tensor product of chain complexes offers us a structured method to combine codes. Formally, we define the codes obtained by a chain complex tensor product as \emph{homological product codes}. 

\begin{mydefinition}[Homological product codes~\cite{10.1145/2591796.2591870}]
    Let $\C_1$ and $\C_2$ be the linear codes defined by chain complexes $X$ and $Y$, respectively.
    The homological product code constructed from the homological product of $\C_1$ and $\C_2$, denoted by $\C_1 \otimes \C_2$, is the code represented by the chain complex as the result of the tensor product of $X$ and $Y$.
    We say $\C_1$ and $\C_2$ are the \emph{base codes} of $\C_1 \otimes \C_2$.
\end{mydefinition}
The homological product of two CSS codes $[[n_i, k_i, d_i]]$, 
$i=1, 2$, is an $[[n_1n_2, k_1k_2, d]]$  CSS code. The distances $d^X, d^Z$ of the product code [\cref{eq:css_distance_defs}] satisfy 
$\max\{d_1^\alpha, d_2^\alpha\} \le d^\alpha \le d_1^\alpha d_2^\alpha$, where $\alpha = X, Z$~\cite{10.1145/2591796.2591870}.

The stabilizer generators and the parameters of the homological product code can be easily obtained by applying theorems from algebraic topology. We define and discuss the chain complex tensor products and their related properties in the context of error correction codes in \cref{app:chain}. Here, we focus on the tensor product of two length-1 chain complexes and the tensor product of two length-2 chain complexes.\footnote{For this reason, the usual graded sign
$(-1)^{i}$ in the product boundary maps is invisible, so the block
expressions for $\partial_2[J]$, $\partial_1[J]$, and their
higher-degree analogues hold as written.  If we work over
$\mathbb{F}_p$ with $p$ odd, the second block in each product map
acquires a minus sign.}

Given length-1 chain complexes $X$ and $Y$, 
\[X:=X_1 \xrightarrow{\partial_1[X]} X_0,\ Y:=Y_1 \xrightarrow{\partial_1[Y]} Y_0.\]
Let $J=X \otimes Y$, then
\[\label{eq:graded-TP-J} 
\begin{aligned}
    J&=J_2 \xrightarrow{\partial_2[J]}J_1\xrightarrow{\partial_1[J]}J_0 \\
    &=X_1 \otimes Y_1 \xrightarrow{\partial_2[J]}\begin{bmatrix} X_0 \otimes Y_1 \\ X_1 \otimes Y_0 \end{bmatrix}\xrightarrow{\partial_1[J]}X_0 \otimes Y_0,
\end{aligned}\]
where
\[\partial_2[J]=\begin{bmatrix}
        \partial_1[X] \otimes \I \\ \I \otimes \partial_1[Y]
    \end{bmatrix},\ \partial_1[J]=\begin{bmatrix}
        \I \otimes \partial_1[Y]\ \partial_1[X] \otimes \I 
    \end{bmatrix}.\]
Given length-1 chain complexes $Z$ and $W$, $K=Z \otimes W$ is derived in the same fashion. Explicitly,
\[K:= K_2 \xrightarrow{\partial_2[K]} K_1 \xrightarrow{\partial_1[K]} K_0,\]
where
\[\partial_2[K]=\begin{bmatrix}
        \partial_1[Z] \otimes \I \\ \I \otimes \partial_1[W]
    \end{bmatrix},\ \partial_1[K]=\begin{bmatrix}
        \I \otimes \partial_1[W]\ \partial_1[Z] \otimes \I 
    \end{bmatrix}.\]
Now, let $Q=J \otimes K$, then
\[
\label{eq:21}
\begin{aligned}
    Q:=J_2 \otimes K_2 \xrightarrow{\partial_4[Q]} &J_1 \otimes K_2 \oplus J_2 \otimes K_1   \\
    \xrightarrow{\partial_3[Q]} &J_0 \otimes K_2 \oplus J_1 \otimes K_1 \oplus J_2 \otimes K_0   \\
    \xrightarrow{\partial_2[Q]} &J_0 \otimes K_1 \oplus J_1 \otimes K_0 \xrightarrow{\partial_1[Q]} J_0 \otimes K_0,
\end{aligned}
\]
where 
\[
\label{eq:l4bo}
\partial_4[Q]=\begin{bmatrix} \I\otimes\partial_2[K] \\ \partial_2[J] \otimes \I \end{bmatrix},\] 
\[\partial_3[Q]=\begin{bmatrix} 0 & \partial_1[J]\otimes \I\\ \partial_2[J]\otimes \I & \I \otimes \partial_2[K]\\ \I \otimes \partial_1[K] & 0\\\end{bmatrix}\]
\[\partial_2[Q]=\begin{bmatrix} \I \otimes \partial_2[K] & \partial_1[J]\otimes \I & 0 \\0 & \I\otimes \partial_1[K] & \partial_2[J]\otimes \I \end{bmatrix},\]
\[ \partial_1[Q]=[\I\otimes\partial_1[K]\ \partial_1[J]\otimes \I].\]
For our purposes, let $\{X,Y,Z,W\}$ represent classical linear codes, and $\{J,K\}$ represent the SE code. We call the resulting code $Q$ the \emph{syndrome-encoded hypergraph product} (SEHGP) code. $Q$ is a CSS version of the SE code. If we start from four classical linear codes, we obtain the SEHGP code after applying the tensor product of chain complexes twice. The flowchart of this construction is shown in \cref{fig:flowchart}. As the SEHGP code has encoded syndromes, it is also described by four parameters $[[n,k,d,d_s]]$.  

\begin{figure}
    \centering
    \includegraphics[width=0.7\linewidth]{./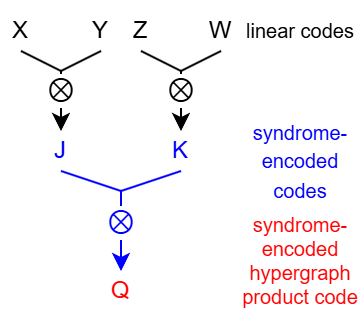}
    \caption{Flowchart of the construction of the SEHGP code, showing how four classical codes ($X,Y,Z,W$) are combined into two syndrome-encoded codes ($J,K$) and finally into the quantum code $Q$..}
    \label{fig:flowchart}
\end{figure}
We also provide two equations regarding the length and the number of logical qubits of the homological product code, which are discussed in detail in \cref{app:chain}.
For the code length:
\[
\label{eq:dim rule}
\dim(J_k)=\sum_{k=i+j} \dim(X_i) \dim(Y_j).
\]
For the number of logical qubits:
\[
\label{eq:betti rule}
b_k(J)=\sum_{k=i+j} b_i(X)b_j(Y),
\]
where $b_k(J)$ stands for the $k$'th Betti number of the $J$ chain complex.

Finally, we introduce a useful family of codes called Elias' \emph{direct product code}~\cite{Elias1954error}.
\begin{mydefinition}[Direct product code]
    Let $C_1 \boxtimes C_2$ be the direct product of two linear codes $\C_1$ and $\C_2$ of length (number of bits) $n_1$ and $n_2$, respectively. A codeword of the direct product code has length $n_1n_2$ and can be viewed as a matrix of dimension $n_1 \times n_2$ such that all its columns are codewords of $\C_1$, and all its rows are codewords of $\C_2$.
\end{mydefinition}
We claim that $\partial_2$ and $\partial_1^T$ of $J$ and $K$ are both direct product codes. We prove this claim for $\partial_2[J]$, and the proof for the rest is similar. More formally:
\begin{myclaim}
\label{claim:1}
The kernel of $\partial_2[J]$ is the direct-product code
$X\boxtimes Y$; equivalently,
$\ker\partial_2[J]=\{ \vect (M)\mid
  M\in X\boxtimes Y\}\subseteq \F_2^{n_1n_2}$.
The same statement holds with $X,Y$ replaced by $Z,W$ for $K$,
and with $\partial_2$ replaced by $\partial_1^T$.
\end{myclaim}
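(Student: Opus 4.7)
The plan is to translate the kernel condition $\partial_2[J]\,v=0$ into two classical parity-check conditions on a matrix. Identifying $J_2=X_1\otimes Y_1$ with $\F_2^{n_1\times n_2}$ via the vec isomorphism and writing $v=\vect(M)$, the block structure $\partial_2[J]=[\partial_1[X]\otimes\I;\ \I\otimes\partial_1[Y]]^{T}$ decomposes the single equation into two independent matrix equations on $M$.

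The first step is to apply the Kronecker--vec identity (with a fixed stacking convention) to rewrite $(\partial_1[X]\otimes\I)\vect(M)=0$ as a parity-check condition acting on $M$ in one direction: every vector in that direction must lie in $\ker\partial_1[X]$, i.e., must be a codeword of $X$. The second block $(\I\otimes\partial_1[Y])\vect(M)=0$ acts in the orthogonal direction and requires every vector in it to lie in $\ker\partial_1[Y]$, i.e., to be a $Y$-codeword. Taken together, these conditions say that $M$ has one axis populated by $X$-codewords and the other by $Y$-codewords, which is exactly the definition of $X\boxtimes Y$. This completes the proof for $\partial_2[J]$.

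The remaining three cases follow by the same argument. For $\partial_2[K]$, replace $(X,Y)$ by $(Z,W)$ throughout to obtain $\ker\partial_2[K]=Z\boxtimes W$. For $\partial_1[J]^{T}=[\I\otimes\partial_1[Y]^{T};\ \partial_1[X]^{T}\otimes\I]^{T}$ and its $K$ counterpart, the block form is identical but with each classical parity check replaced by its transpose; the same derivation then identifies the kernel with the direct product of the \emph{transpose codes}, e.g.\ $\ker\partial_1[J]^{T}=X^{T}\boxtimes Y^{T}$.

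The only real subtlety is bookkeeping: one must fix the vec-stacking convention so that the two block rows of $\partial_2[J]$ correspond to ``columns'' and ``rows'' of $M$ in the sense of the direct-product-code definition, and check that the resulting matrix shapes match the dimensions of $X_0\otimes Y_1$ and $X_1\otimes Y_0$. Once that is pinned down, each block row produces its parity-check condition by a one-line Kronecker identity, so I do not anticipate any genuine obstacle. A pedagogical alternative, should the vec conventions prove distracting, is to verify the assertion on elementary tensors $e_i\otimes f_j$ and extend by linearity.
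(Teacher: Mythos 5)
Your proposal is correct and follows essentially the same route as the paper's proof: identify $J_2$ with matrices via the vec map, use the Kronecker--vec identity to split $\partial_2[J]\,v=0$ into the two conditions $\partial_1[X]M=0$ and $M\,\partial_1[Y]^T=0$, and read these as column/row membership in $X$ and $Y$, with the remaining cases obtained by the same substitution. Your explicit remark that the $\partial_1^T$ cases yield the direct product of the \emph{transpose} codes (e.g.\ $X^T\boxtimes Y^T$) is a welcome precision that the paper leaves implicit but relies on later when computing $d(\partial_1^T)=d^T(X)\,d^T(Y)$.
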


\begin{proof}
Let $\vect : \F_2^{n_1\times n_2}\to \F_2^{n_1n_2}$ be the
column-stacking map and recall the identity
\[
  (A\otimes B) \vect (C)= \vect\bigl(BCA^T\bigr).
\]

Direct-product $ \Rightarrow $ kernel:
Take any $M\in X\boxtimes Y$,
so each column of $M$ lies in $\ker\partial_1[X]$ and
each row lies in $\ker\partial_1[Y]$.
Hence
\[
  \partial_1[X] M = 0,
  \qquad
  M \partial_1[Y]^T=0 .
\]
Applying $\vect $ and the identity above,
\[
  (\partial_1[X] \otimes I_{n_2})\vect (M)=0,
  \quad
  (I_{n_1} \otimes \partial_1[Y])\vect (M)=0,
\]
which means $\vect (M)\in\ker\partial_2[J]$.

Kernel $ \Rightarrow $ direct-product:
Conversely, let $v\in\ker\partial_2[J]$ and set $M=\operatorname{unvec}(v)$.
The two block equations
\[
  (\partial_1[X] \otimes I_{n_2})v=0,
  \quad
  (I_{n_1} \otimes \partial_1[Y])v=0
\]
translate back to
$
  \partial_1[X] M = 0
$
and
$
  M \partial_1[Y]^T=0
$,
so the columns of $M$ lie in $X$ and its rows lie in $Y$.
Thus $M\in X\boxtimes Y$.

Combining the above two implications yields
$\ker\partial_2[J]=\{\vect (M):M\in X\boxtimes Y\}$.
The proofs for $\partial_1[J]^T$ and for the code $K$
follow by exchanging $X\leftrightarrow Y$ (or $Z\leftrightarrow W$)
and interchanging rows with columns.
\end{proof}

We give an example of this construction in \cref{app:example}.
As a result, the parameters of the code defined by $\partial_2[J]$ follow directly from the parameters of $X \boxtimes Y$, which are given by the following lemma.
\begin{mylemma}[\cite{macwilliams77}]
    \label{lem:direct-product}
    Let $\C_1$ be a $[n_1,k_1,d_1]$ code and $\C_2$ be a $[n_2,k_2,d_2]$ code. Then, $\C_1 \boxtimes C_2$ is a $[n_1n_2,k_1k_2,d_1d_2]$ code. 
\end{mylemma}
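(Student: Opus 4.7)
The plan is to verify the three parameters $[n_1 n_2, k_1 k_2, d_1 d_2]$ separately. The length claim is immediate from the definition: a codeword of $\C_1 \boxtimes \C_2$ is an $n_1 \times n_2$ matrix with entries in $\F_2$, hence has length $n_1 n_2$ after vectorization.

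For the dimension, I would give an explicit parameterization of codewords. Let $G_i$ be a full-rank $k_i \times n_i$ generator matrix of $\C_i$. Since every row of $M \in \C_1 \boxtimes \C_2$ lies in $\C_2$, we can write $M = B G_2$ for a unique $n_1 \times k_2$ matrix $B$ (uniqueness follows from $G_2$ having linearly independent rows). The column-wise constraint forces the image of $M$ to lie in $\C_1$; since $G_2$ has full row rank this image equals the column span of $B$, so each column of $B$ lies in $\C_1$. Writing each such column as $G_1^T a$ with $a \in \F_2^{k_1}$ gives $B = G_1^T A$ for a unique $k_1 \times k_2$ matrix $A$. Hence every codeword has the form $M = G_1^T A G_2$, and the map $A \mapsto G_1^T A G_2$ is injective since $G_1^T$ and $G_2$ have full rank. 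This yields $\dim(\C_1 \boxtimes \C_2) = k_1 k_2$.

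For the distance, I would bound it from both sides. For the upper bound, pick minimum-weight codewords $c_1 \in \C_1$ and $c_2 \in \C_2$. The outer product $M = c_1 c_2^T$ lies in $\C_1 \boxtimes \C_2$: each column is $0$ or $c_1$, and each row is $0$ or $c_2$. Its Hamming weight equals $\wt(c_1)\wt(c_2) = d_1 d_2$, so $d(\C_1 \boxtimes \C_2) \le d_1 d_2$. For the lower bound, let $M$ be any nonzero codeword. Some column $j$ is a nonzero element of $\C_1$ and therefore has at least $d_1$ nonzero entries; let $I$ be the set of their row indices, so $|I| \ge d_1$. For each $i \in I$, the $i$-th row of $M$ is a nonzero codeword of $\C_2$ (its $j$-th entry is nonzero) and hence has weight at least $d_2$. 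Summing over $i \in I$ gives $\wt(M) \ge |I|\, d_2 \ge d_1 d_2$.

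The only real subtlety is the distance lower bound, where one must fix a single nonzero column of $M$ and count the weights of the rows meeting it, rather than attempting to sum weights over all nonzero columns (which would overcount shared nonzero entries). The remaining steps are routine bookkeeping with generator matrices.
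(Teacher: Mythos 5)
Your proposal is correct and takes essentially the same route as the paper's appendix proof: the length and distance arguments are identical, and your parameterization $M=G_1^{T}AG_2$ is just a repackaging of the paper's basis of outer products $G_1^{T}E_{ij}G_2$ (with the advantage of avoiding the paper's loosely written inverses $(G_1^{T})^{-1}$ and $G_2^{-1}$, which only exist as one-sided inverses). The only step left implicit is the trivial converse inclusion that every matrix $G_1^{T}AG_2$ lies in $\C_1\boxtimes\C_2$ (its rows are combinations of rows of $G_2$, its columns lie in the row space of $G_1$), which is what upgrades $\dim(\C_1\boxtimes\C_2)\le k_1k_2$ to equality.
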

The proof of this lemma can be found in Appendix \ref{app:proof of the parameters of the Elias' product code}.

\section{Biased Noise Model}
\label{sec:4}

Throughout this work, we assume a noise model where each qubit is independently subject to a Pauli channel 
\begin{equation}
    \label{eq:channel}
    \mathcal{E}=(1-p)\rho+p_X X\rho X+p_Y Y\rho Y+p_Z Z\rho Z,
\end{equation}
where $\rho$ is the density matrix of the qubit, $X$, $Y$ and $Z$ are the Pauli matrices, and $p_X$, $p_Y$ and $p_Z$ are the probabilities of $X$, $Y$ and $Z$ errors such that $p_X+p_Y+p_Z=p$, $p$ is the total error probability. Under the independent noise model assumption, the $n$ qubit system is then subject to the noise channel $\mathcal{E}^{\otimes n}$. 

We define the \emph{noise bias} 
\[
\eta_i=\frac{p_i}{\sum_{j \neq i} p_j},\qquad i \in \{X,Y,Z\} .
\]
Depolarizing noise and infinitely biased noise are at opposite extremes of the noise bias range~\cite{tuckett2018ultrahigh, tuckett2019tailoring}. Under depolarizing noise $\eta_X=\eta_Y=\eta_Z=0.5$ and $p_X=p_Y=p_Z$, and each type of error is equally likely. Under infinitely $Z$-biased noise $\eta_Z=\infty$ and $\eta_X=\eta_Y=0$, and only $Z$ errors occur. 

The threshold of a quantum code is a maximum tolerable physical error rate below which the logical error rate can, in principle, be driven arbitrarily low by increasing the code size~\cite{knill1998resilient}. More formally, A threshold for a family of quantum error-correcting codes is a critical value $p_{\text{th}}$ such that, if the physical error rate $p<p_{\text{th}}$, then by increasing the code distance $d$ the logical error rate can be reduced arbitrarily. Conversely, if $p>p_{\text{th}}$, the logical error rate remains bounded away from $0$ even as $d\to\infty$. 
 
By mapping its code Hamiltonian onto a classical disordered eight-vertex Ising model, the toric code was shown by Bombin \emph{et al.} to have an error threshold of 16.4\% under depolarizing noise and of 10.9\% under infinitely biased noise~\cite{bombin2012strong} 
\DL{I couldn't find these threshold values in this reference}. 
By tailoring the topology of the toric code, the XZZX code maintains an error threshold of 
\DL{you wrote 16.4 but the paper \cite[p.3]{bonilla2021xzzx} says 18.7}
$18.7(1)\%$ under depolarizing noise while achieving 50\% under infinitely biased noise~\cite{bonilla2021xzzx}.

\subsection{Simplified Analysis: Identical Codes}
\label{sec:identical-codes}

We propose multiple families of codes in this paper and wish to compare their parameters. In this section, for simplicity, we assume all four of the base codes are the same $[n,k,d]$ code.  Moreover, this code is defined by an $n \times n$ parity-check matrix $H$ with $\ker H = \ker H^T$. The code and its transpose are thus identical, i.e., $n^T=n$, $k^T=k$, and $d^T=d$. In addition, the code's distance scales linearly with $n$. 

A simple example of this type of code is the ``closed-loop'' repetition code, i.e., a code whose
$n$ bits are arranged in a ring and the code requires that all bits be equal (i.e., the two codewords are all-$0$ and all-$1$).\footnote{A closed-loop repetition code is defined by a parity-check matrix in which one of these rows is always a linear combination of the others. For example, if $n=3$, the parity-check matrix is
$\Big(\begin{smallmatrix}
     1 & 1 & 0 \\
     0 & 1 & 1 \\
     1 & 0 & 1 \\
\end{smallmatrix}\Big)$.}
As a more general example, consider the general form of the parameters of the homological product code 
\[
[[n_1n_2+m_1m_2,k_1k_2+k_1^Tk_2^T,\min\{d_1,d_2,d_1^T,d_2^T\}]] .
\]
Under our simplifying assumption, this reduces to $[[2n^2,2k^2,d]]$. Let us compute the parameters of the SEHGP code (\cref{fig:flowchart}) under this simplification.

When we build the SEHGP code, we start with two 2-step complexes
$J$ and $K$ and tensor them once more (see \cref{fig:flowchart}), producing a four-step complex
\[
  Q_4 \xrightarrow{\partial_4} Q_3
      \xrightarrow{\partial_3} Q_2
      \xrightarrow{\partial_2} Q_1
      \xrightarrow{\partial_1} Q_0 ,
\]
where, reading off the different subspaces from \cref{eq:21}, we have:
\[
\begin{aligned}
Q_3 &= J_1 \otimes K_2 \oplus J_2 \otimes K_1\\
Q_2 &= J_0 \otimes K_2  \oplus J_1 \otimes K_1  \oplus J_2 \otimes K_0 \\
Q_1 &= J_0 \otimes K_1 \oplus J_1 \otimes K_0 .
\end{aligned}
\]
The physical qubits are identified with $Q_2$, so the
number of physical qubits is $\dim Q_{2}$.
$Q_3$ and $Q_1$ play the role of $Z$- and $X$-type parity checks, respectively. Hence, the sum of their dimensions is the total number of measurements (or checks). Finally, the number of encoded qubits equals the second Betti number $b_2(Q)=\ker\partial_2 \big/ \im\partial_3$. 

Next, let us calculate all these quantities explicitly.
Note that for every base code $X$ (and hence for $Y,Z,W$ as they are identical) we have a
length-1 chain complex
\[
  X_{1}\xrightarrow{\partial_{1}[X]=H}X_{0},
  \qquad \dim X_{1}=n, \dim X_{0}=n .
\]
The graded tensor-product rule \cref{eq:graded-TP-J} gives, for
$J=X\otimes Y\:\bigl($and identically for $K=Z\otimes W\bigr)$,
\[
\label{eq:37}
\begin{aligned}
J_{2}&=X_{1}\otimes Y_{1} &\Longrightarrow&\ \dim J_{2}=n^2,\\
J_{1}&=X_{0}\otimes Y_{1} \oplus X_{1}\otimes Y_{0}
     &\Longrightarrow&\ \dim J_{1}=2n^2,\\
J_{0}&=X_{0}\otimes Y_{0} &\Longrightarrow&\ \dim J_{0}=n^2.
\end{aligned}
\]
Hence
\[
\label{eq:38}
\begin{aligned}
  &\partial_{2}[J]\colon J_{2}\to J_{1}\quad
     \text{is an } 2n^2\times n^2\text{ matrix}\\
  &\partial_{1}[J]\colon J_{1}\to J_{0}\quad
     \text{is } n^2\times 2n^2.
\end{aligned}
\]
Exactly the same dimensions hold for $\partial_{2}[K]$ and
$\partial_{1}[K]$.

For any tensor product $J=X\otimes Y$ the Betti numbers are given by \cref{eq:betti rule}.
Because $\ker H=\ker H^T$, every base code has
$b_{1}(X)=b_{0}(X)=k$.
Applying \cref{eq:betti rule} to $J$ (and likewise $K$) yields
\[
\label{eq:39}
b_{2}(J)=k^{2},\quad 
 b_{1}(J)=2k^{2},\quad
b_{0}(J)=k^{2}.
\]

Using \cref{eq:dim rule,eq:betti rule} once more together with \cref{eq:37,eq:39},
\[
\begin{aligned}
\dim Q_{2}&= n^{2}n^{2}+2n^{2} 2n^{2}+n^{2}n^{2}=6n^{4},\\
b_{2}(Q)  &= k^{2}k^{2}+2k^{2} 2k^{2}+k^{2}k^{2}=6k^{4}.
\end{aligned}
\]
Thus, the
number of physical qubits is $\dim Q_{2}=6n^{4}$,
and the number of logical qubits is
$b_{2}(Q) = 6k^{4}$. 

With 
$\dim J_{1}=\dim K_{1}=2n^{2}$
and 
$\dim J_{0}=\dim J_{2}=\dim K_{2}=n^{2}$,
we have
\[
\label{eq:42}
\begin{aligned}
\dim Q_{3}
 &= \dim\bigl(J_{1} \otimes K_{2}\bigr) + \dim\bigl(J_{2} \otimes K_{1}\bigr)
  = 4n^{4}, \\
\dim Q_{1} &= \dim\bigl(J_0 \otimes K_1\bigr) + \dim\bigl(J_1 \otimes K_0\bigr) = 4n^{4}.
\end{aligned}
\]
Hence the SEHGP code family uses $8n^{4}$ stabilizer checks in total.

Finally, every non-trivial logical operator of the SEHGP code lives in $Q_2$. Each direct-sum component of $Q_2$ contains at least one tensor factor that is a logical operator of the identical base code, so the weight cannot drop below $d$.
Conversely, a weight-$d$ logical operator $L$ of the base code can be written as $L\otimes I \otimes I\otimes I \subset Q_{2}$. This shows
that the weight cannot increase above $d$.
Hence $d_{\text{SEHGP}} = d$.

Thus, under our simplifying assumption (all four base codes
identical) the SEHGP code family has parameters $[[6n^{4},6k^{4},d]]$.

\subsection{Single-shot Error Correction}
\label{sec:homo}

In this section, we briefly introduce single-shot error correction, following Campbell~\cite{campbell2019theory}. A decoder of a quantum code is an algorithm that outputs a recovery Pauli operator $E_r$ based on knowledge of the code and the measured syndrome vector $s$. Due to noisy measurement, $s=\sigma(E)+u$, where $\sigma(E)$ is the noiseless syndrome of the error $E$, and $u$ is the measurement error. A check set $\mathcal{M}\subset \mathcal{S}$ is a generator set of a stabilizer group $\mathcal{S}$. Good decoder design ensures that the residual error $E_r E$ due to noisy measurement is small. Formally, we define:
\begin{mydefinition}[Single-shot error correction,~\cite{campbell2019theory}]
Let $p$ and $q$ be integers and $f : \mathbb{Z} \to \mathbb{R}$ be some function with
$f(0) = 0$. We say a check set $\mathcal{M}$ is $(p, q, f)$ single-shot if
there exists a decoder such that for all syndrome errors $u$ and qubit errors $E$ such that $|u| < p$ and $f(2|u|) + |E| < q$, the decoder takes syndrome $s = \sigma(E) +u$ and outputs recovery operation $e$ such that $\min|E_r E| \leq f(2|u|)$, where the minimization is over multiplication by all elements of the code stabilizer, and $|P|$ denotes the weight of the Pauli operator $P$.
\end{mydefinition}

More rigorously, good single-shot properties require considering an infinite family of check sets.
\begin{mydefinition}[Good single-shot families,~\cite{campbell2019theory}]
    Consider an infinite check family $\mathcal{M}_n$ of $n$-qubit stabilizer codes. We say the family is a good single-shot family if each $\mathcal{M}_n$ is $(p, q, f)$ single-shot where $p, q \in \Omega(n^b)$ with $b > 0$, and $f(x)$ is some polynomial that is monotonically increasing with $x$ and independent of $n$.
\end{mydefinition}
In plain words, $p$ and $q$ grow polynomially with the size of the codes, and $f(x)$ remains constant with the size of the codes to remain constrained. 
The definition of single-shot error correction is somewhat obscure due to the existential condition of a decoder and the fact that there are many possible decoders. Alternatively,~\cite{campbell2019theory} borrows the concept of \emph{soundness} introduced in~\cite{aharonov2015quantum, hastings2016quantum} from locally testable codes.

\begin{mydefinition}[Soundness~\cite{campbell2019theory}]
    Let $t$ be an integer and $f : \mathbb{Z} \to \mathbb{R}$ some increasing function with $f (0) = 0$. We say that a stabilizer code has $(t, f )$ soundness if, for all errors Pauli $E$ with $|\sigma (E)| \leq t$,  
    \[
    f(|\sigma (E)|) \geq |E|^{\mathrm{red}},
    \]
   where the reduced weight was defined in \cref{eq:red}.
\end{mydefinition}

Similar to single-shot error correction, we can also define good soundness.
\begin{mydefinition}[Good soundness~\cite{campbell2019theory}]
    Consider an infinite family of stabilizer codes. We say that the family has good soundness if each code in it has $(t, f )$ soundness, where the following holds:
    \begin{enumerate}
        \item $t$ grows with the length $n$ of the code: $t = \Omega(n^b)$ with $b > 0$; 
        \item $f$ is monotonically increasing and independent of $n$.
    \end{enumerate}
\end{mydefinition}

Intuitively, a code has good soundness if small qubit errors produce small measurement syndromes. For example, Kitaev's 2D surface code~\cite{kitaev1997quantum} has bad soundness. 
To see this, recall that the 2D surface code is defined on a square lattice.
Consider two $Z$ stabilizer measurements that are distant on the square lattice, and both measurements yield $-1$. Suppose the cause is that
all qubits between these two stabilizers suffer $X$ errors. Hence, in this code, a low-weight syndrome can be produced by very high-weight qubit errors. 

One of the main results of \cite{campbell2019theory} is the connection between soundness and the success of single-shot error correction.
\begin{mytheorem}[Good soundness means good single-shot error correction~\cite{campbell2019theory}]
    \label{thm:good-soundness}
    Consider a quantum error correcting code with parameters $[[n, k, d, d_{s}]]$ and a check set that is $(t, f)$-sound. It is also $(p, q, f)$ single-shot where $p=\min\{d_{s}, t\}/2$ and $q=d/2$. 
\end{mytheorem}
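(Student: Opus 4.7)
The plan is to define an explicit two-stage decoder and then verify that it meets the $(p,q,f)$ single-shot criterion using only the $(t,f)$-soundness hypothesis plus elementary weight bookkeeping.

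\emph{Stage one (syndrome correction).} Given the noisy syndrome $s=\sigma(E)+u$, I would compute the redundancy vector $H_s s = H_s u$ and run a minimum-weight classical decoder for the syndrome code $H_s$ to obtain some $u'$ with $H_s u' = H_s s$. Since $u$ itself is a feasible candidate, $|u'|\le|u|$, and hence $|u-u'|\le 2|u|<2p\le t$. The corrected vector $s' := s - u'$ then lies in $\ker H_s = \im\sigma$, so it is a valid Pauli syndrome.

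\emph{Stage two (error recovery).} Output any minimum-weight Pauli $E_r$ with $\sigma(E_r)=s'$. I would then invoke $(t,f)$-soundness on the syndrome $u-u'$ (whose weight is below $t$) to obtain a representative $e_u^\star$ satisfying $\sigma(e_u^\star)=u-u'$ and $|e_u^\star|\le f(|u-u'|)\le f(2|u|)$, using the monotonicity of $f$. Because $\sigma(E+e_u^\star)=s'$, the minimum-weight choice of $E_r$ gives $|E_r|\le|E+e_u^\star|\le|E|+f(2|u|)$. The composite $R := E_r E e_u^\star$ has $\sigma(R)=0$, so $R\in N(\mathcal{S})$, and the triangle inequality yields
\[ |R|\le|E_r|+|E|+|e_u^\star|\le 2|E|+2f(2|u|)<2q=d. \]

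The main obstacle is the last implication: concluding from $|R|<d$ that $R$ is a stabilizer, not a nontrivial logical operator. This is precisely where $q=d/2$ enters the statement -- every nontrivial element of $N(\mathcal{S})/\mathcal{S}$ has weight at least $d$, so $|R|<d$ forces $R\in\mathcal{S}$. Hence $E_r E$ and $e_u^\star$ lie in the same stabilizer coset, and $\min_{S\in\mathcal{S}}|E_r E\,S|\le|e_u^\star|\le f(2|u|)$, which is exactly the $(p,q,f)$ single-shot conclusion. The role of $d_s$ in $p=\min\{d_s,t\}/2$ is to sharpen the syndrome-decoding step (forcing $u'=u$ unambiguously when $d_s\le t$), while the $t$ piece keeps $|u-u'|<t$ so that soundness is applicable in the first place.
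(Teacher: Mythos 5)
The paper does not actually reprove this theorem---it is imported verbatim from Campbell---but the standard proof is the two-stage decoder you sketch (syndrome repair, minimum-weight recovery, then the coset argument using $2q=d$), and your final bookkeeping ($|E_r|\le|E|+f(2|u|)$, the residual $R=E_rEe_u^\star$ lying in $N(\mathcal{S})$ with $|R|<d$, hence $R\in\mathcal{S}$ and $\min_S|E_rES|\le f(2|u|)$) is correct. So the architecture is right; the problem is in stage one.

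The genuine gap is your assertion that $\ker H_s=\im\sigma$, from which you conclude that the repaired syndrome $s'=s-u'$ is automatically a valid Pauli syndrome. In general one only has $\im\sigma\subseteq\ker H_s$ (the metachecks are merely required to satisfy $H_sH=0$), and for the chain-complex constructions in this paper the inclusion is typically strict, because the homology at the syndrome level need not vanish. This is exactly what the single-shot distance is for: $d_s$ is the minimum weight of a vector in $\ker H_s\setminus\im\sigma$ (not the ordinary distance of the code $\ker H_s$, which would be $O(1)$ for an LDPC family, and not a quantity that forces $u'=u$). The correct use of $p=\min\{d_s,t\}/2$ is: the discrepancy $v:=u-u'$ satisfies $H_sv=0$ and $|v|\le 2|u|<2p\le d_s$, so $v$ must lie in $\im\sigma$, i.e.\ $v=\sigma(e_u^\star)$ for some Pauli $e_u^\star$; only then, since also $|v|\le 2|u|<2p\le t$, can soundness be invoked to pick $e_u^\star$ with $|e_u^\star|\le f(|v|)\le f(2|u|)$. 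You cannot ``apply soundness to the syndrome $u-u'$'' before knowing it is a syndrome of some Pauli error, and your closing remark that $d_s$ ``forces $u'=u$'' is a mischaracterization---minimum-weight syndrome repair may well return $u'\neq u$; the theorem only needs the discrepancy to be a low-weight \emph{valid} syndrome. With this one step repaired, your argument coincides with Campbell's proof.
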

Here $d_s$ denotes the single-shot distance of the code, defined in \cref{sec:3}. This result is important because it indicates that good soundness implies good single-shot error correction ability, which allows us to bypass the requirement for an explicit decoder when proving good single-shot ability. 

\subsection{Soundness from code homological product}
Besides creating new codes, the homological product of two codes can also give us the desired property of good soundness. 
In order to properly discuss soundness in the context of chain complexes, we use the following definition:

\begin{mydefinition}[Soundness of maps~\cite{campbell2019theory}]
Let $t$ be an integer and $f : \mathbb{Z} \to \mathbb{R}$ some increasing function with $f (0) = 0$. We say that a map $\partial$ has $(t, f )$ soundness if, for all binary vectors $E \in \F^n$ with $|\partial (E)| \leq t$, 
\[
f (|\partial (E)|) \geq \min\{\partial(E): \partial(E)=\partial(F), F \in \F^n\}.
\]
\end{mydefinition}
A linear code represented by a length-1 chain complex $C_1 \xrightarrow{\partial_1} C_0$ is $(t,f)$ sound if $\partial_1$ is $(t,f)$ sound because its syndrome map $\sigma$ is $\partial_1$. Hence, if $\partial_1$ has good soundness, the linear code it represents also has good soundness. 
Suppose a CSS code is represented by a length-2 chain complex $C_2 \xrightarrow{\partial_2} C_1 \xrightarrow{\partial_1} C_0$, 
where $\partial_2^T$ is $(t_2,f_2)$ sound, and $\partial_1$ is $(t_1,f_1)$ sound. Let $t=\min\{t_1,t_2\}$ and $f(x)=\min\{f_1(x),f_2(x)\}$ for all $x$. Then, the CSS code is $(t,f)$ sound because its syndrome map $\sigma$ is $\partial_1\oplus\partial_2^T$. If both $\partial_1$ and $\partial_2^T$ have good soundness, the CSS code has good soundness. 

We are now ready to state the first important lemma that connects soundness with the homological product of two codes. 
\begin{mylemma}[First soundness lemma~\cite{campbell2019theory}]
    \label{lemma: first soundness}
    Given a length-2 chain complex 
    \[J:= J_2\xrightarrow{\partial_2[J]}J_1 \xrightarrow{\partial_1[J]}J_0,\]
    where $J$ represents the homological product code $\C_1 \otimes \C_2$.
    The maps $\partial_2[J]$ and $\partial_1^T[J]$ are $(t,f)$ sound, where $t=\min\{d(\C_1),d(\C_2)\}$ and $f(x)=x^2/4$.
\end{mylemma}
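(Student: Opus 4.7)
The plan is to view the input $E\in\F_2^{J_2}$ as a matrix $M\in\F_2^{n_1\times n_2}$ through the vectorization $E=\vect(M)$, and to exploit the block structure $\partial_2[J]=[\partial_1[X]\otimes I;\ I\otimes\partial_1[Y]]$. Applying the identity $(A\otimes B)\vect(C)=\vect(BCA^T)$ already used in the proof of \cref{claim:1}, the syndrome splits as a column part $s_1=\partial_1[X]M$ and a row part $s_2=M\partial_1[Y]^T$ whose Hamming weights sum to $|\partial_2[J]E|$.

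Next, I would let $R$ be the set of rows of $M$ whose row-syndrome is nonzero and $C$ the set of columns whose column-syndrome is nonzero. Since every bad row (column) contributes at least one unit to $|s_2|$ (respectively $|s_1|$), we obtain $|R|\le|s_2|$ and $|C|\le|s_1|$. Every row $i\notin R$ is by construction a codeword of $\C_2$ (either zero or of weight $\ge d(\C_2)$), and every column $j\notin C$ is a codeword of $\C_1$ (either zero or of weight $\ge d(\C_1)$). Under the hypothesis $|\partial_2[J]E|\le t=\min\{d(\C_1),d(\C_2)\}$ one has $|R|<d(\C_1)$ and $|C|<d(\C_2)$, so no nonzero codeword of $\C_2$ can be supported entirely on $C$, and no nonzero codeword of $\C_1$ entirely on $R$.

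The heart of the argument is then to modify $E$ by elements of $\ker\partial_2[J]$ so that the chosen representative $F$ is supported only on $R\times C$. By \cref{claim:1}, $\ker\partial_2[J]=X\boxtimes Y$ consists exactly of matrices whose rows lie in $\C_2$ and columns in $\C_1$; as a vector space this is $\C_1\otimes\C_2$, spanned by rank-$1$ tensors $u\otimes v$ with $u\in\C_1$ and $v\in\C_2$. I would run a peeling-style cleanup: as long as some row $i\notin R$ (or column $j\notin C$) carries nonzero support, its entries form a nonzero codeword of $\C_2$ (respectively $\C_1$) whose support, by the distance bounds, necessarily meets both $C^c$ and $R^c$; one can therefore cancel it by subtracting a suitable $u\otimes v$ without creating any new support outside $R\times C$. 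Once $F$ is supported in $R\times C$, its weight obeys $|F|\le|R|\cdot|C|\le|s_1|\cdot|s_2|$, and the AM-GM inequality gives $|s_1||s_2|\le(|s_1|+|s_2|)^2/4=|\partial_2[J]E|^2/4=f(|\partial_2[J]E|)$, which is exactly the claimed soundness. The statement for $\partial_1[J]^T$ follows by the same argument, because the second half of \cref{claim:1} identifies $\ker\partial_1[J]^T$ with a direct-product code in transposed coordinates; only the roles of rows and columns swap.

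The step I expect to be the main obstacle is the cleanup: naively subtracting a rank-$1$ codeword $u\otimes v$ to kill an offending row can re-introduce support at positions outside $R\times C$ in other rows, so the ordering of cancellations and the choice of $u,v$ must be arranged carefully. This is precisely where the inequalities $|R|<d(\C_1)$ and $|C|<d(\C_2)$ are used most critically; without them the peeling procedure can fail to converge, and the $|R|\cdot|C|$ upper bound on the residual weight breaks down. The bookkeeping to certify termination (for instance, by choosing $u$ to be a minimum-weight codeword of $\C_1$ meeting the right support, and similarly for $v$) is the only genuinely technical piece; everything else is linear algebra on the tensor-product structure and the AM-GM estimate.
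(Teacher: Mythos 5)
Your overall skeleton is the right one, and it is essentially the route Campbell takes (the paper itself does not reprove this lemma; it imports it from \cite{campbell2019theory}): view $E$ as a matrix, split the syndrome into a column part $s_1$ and a row part $s_2$, observe that rows (columns) with vanishing row (column) syndrome are codewords of $\C_2$ ($\C_1$), aim for an equivalent error supported on the bad set $R\times C$, and finish with $|R|\,|C|\le|s_1|\,|s_2|\le|s|^2/4$. The AM--GM step and the identification of $\ker\partial_2[J]$ with the product code via \cref{claim:1} are fine.

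The genuine gap is the step you yourself flag as ``bookkeeping'': the existence of a coset representative supported on $R\times C$ \emph{is} the entire content of the lemma, and your peeling procedure does not establish it. Subtracting a rank-one kernel element $u\otimes v$ (say $u$ the offending column, $v$ the offending row) zeroes that row and column but flips every entry of $\mathrm{supp}(u)\times\mathrm{supp}(v)$, and since $|u|\ge d(\C_1)$ and $|v|\ge d(\C_2)$ this can create up to $(|u|-1)(|v|-1)$ new nonzero entries, many of them in previously clean good rows and columns; the natural potentials (total weight, or weight outside $R\times C$, or number of dirty good rows/columns) can all \emph{increase} under this move. So the inequalities $|R|<d(\C_1)$, $|C|<d(\C_2)$ alone do not certify termination, and no ordering or choice rule you describe repairs this---one needs a genuinely different argument (as in Campbell's proof) to produce the low-weight representative, e.g.\ a column-by-column construction of an elementary-tensor decomposition with per-column weight controlled by $|s_2|$, rather than greedy cancellation. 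Two smaller points: with the paper's hypothesis $|\sigma(E)|\le t$ your strict inequalities $|R|<d(\C_1)$, $|C|<d(\C_2)$ can fail in the boundary case (e.g.\ $s_1=0$, $|s_2|=t$), so the edge case needs separate treatment; and for $\partial_1^T[J]$ the row/column-swapped argument actually yields $t=\min\{d^T(\C_1),d^T(\C_2)\}$ (transpose-code distances), which matches the stated $t$ only when $d=d^T$, an assumption the paper invokes later but that your proof should make explicit.
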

We say that the soundness of $\partial_2[J]$ and $\partial_1^T[J]$ are ``for free'' because there is no requirement on the soundness of $\C_1$ and $\C_2$.

\DL{I inserted this Corollary here since this result is referenced later but wasn't formally derived.}

\begin{mycorollary}[Soundness of the SEHGP family]
\label{cor:SEHGP-soundness}
The composite syndrome map
$\partial_{2}\oplus\partial_{3}^{T}$ of every SEHGP code is
$(t,f)$-sound with
\[
t=\min\{d(X),d(Y),d(Z),d(W)\},\quad f(x)=x^{2}/4.
\]
Hence, the SEHGP family enjoys good soundness under depolarizing noise.
Because each summand $\partial_{2}$ and $\partial_{3}^{T}$ is itself
$(t,f)$-sound, the same bound holds when the channel produces only
$X$ errors or only $Z$ errors.
\end{mycorollary}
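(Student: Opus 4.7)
The plan is to decouple the argument by Pauli error type and then apply \cref{lemma: first soundness} to each half of the syndrome map. Recall that the CSS slice of the length-4 complex $Q=J\otimes K$ is $Q_{3}\xrightarrow{\partial_{3}}Q_{2}\xrightarrow{\partial_{2}}Q_{1}$, so a Pauli error with binary representation $(e_{X},e_{Z})$ on the physical qubits $Q_{2}$ produces the composite syndrome $(\partial_{3}^{T}e_{X},\,\partial_{2}e_{Z})\in Q_{3}\oplus Q_{1}$. The two sectors occupy disjoint syndrome registers and a minimum-weight representative of the combined error satisfies $|E|^{\mathrm{red}}\le|e_{X}|^{\mathrm{red}}+|e_{Z}|^{\mathrm{red}}$, so it is enough to prove separately that $\partial_{2}[Q]$ and $\partial_{3}[Q]^{T}$ are each $(t,x^{2}/4)$-sound with $t=\min\{d(X),d(Y),d(Z),d(W)\}$; taking minima over the two thresholds and the two functions then yields the stated $(t,f)$-soundness of the composite map.

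For $\partial_{2}[Q]$ I would focus on the bottom length-2 sub-complex $Q_{2}\to Q_{1}\to Q_{0}$. Using associativity of the graded tensor product one has $Q=X\otimes Y\otimes Z\otimes W$, and the block form of $\partial_{2}[Q]$ in \cref{eq:21}, together with a Künneth-style analysis in the same spirit as \cref{claim:1}, shows that $\ker\partial_{2}[Q]$ splits as a direct sum whose summands are tensor products of direct-product codes built from the four base codes. Each such summand has minimum weight at least $\min\{d(X),d(Y),d(Z),d(W)\}$ by \cref{lem:direct-product}, and the proof of \cref{lemma: first soundness} relies on precisely this direct-product structure of the kernel to produce the quadratic bound $f(x)=x^{2}/4$, so its argument transfers. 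The same strategy applied to the top sub-complex $Q_{4}\to Q_{3}\to Q_{2}$ (equivalently, the bottom sub-complex of the dual cochain complex of $Q$) handles $\partial_{3}[Q]^{T}$: \cref{lemma: first soundness} already bounds the soundness of the transpose map with the same threshold and the same quadratic function, so the parameters carry over. Combining the two halves gives good soundness of the SEHGP family under depolarizing noise, since $t$ scales linearly with the base-code distances and hence with $n$.

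For purely $X$ or purely $Z$ channels one of $e_{X},e_{Z}$ vanishes, so only the corresponding summand $\partial_{3}[Q]^{T}$ or $\partial_{2}[Q]$ contributes to the syndrome, and the individual $(t,x^{2}/4)$-soundness established above immediately gives the claimed bound without any reduced-weight splitting. The main obstacle I anticipate is justifying rigorously that \cref{lemma: first soundness}, stated for a length-2 homological product of two length-1 classical-code complexes, transfers to the length-2 middle slice of a length-4 tensor product of four such complexes. The conceptual content is the same — soundness is driven by the direct-product structure of the kernel together with a counting argument bounding the reduced weight by a quadratic function of the syndrome weight — but the combinatorics of the block boundary maps in \cref{eq:21} is more elaborate than in the length-2 case, and one must carefully track four base codes rather than two through every inequality. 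This bookkeeping, rather than any new idea, is where the real work of the proof lives.
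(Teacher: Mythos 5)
Your skeleton matches the paper's proof: split the error into the $X$ and $Z$ sectors seen by the two disjoint syndrome registers, establish soundness of $\partial_{2}[Q]$ and $\partial_{3}^{T}[Q]$ individually, and treat pure-$X$/pure-$Z$ noise as the trivial special case. The genuine gap sits at exactly the step that carries the corollary's content: you never establish the individual $(t,x^{2}/4)$-soundness of the two middle maps. The paper obtains it in one line by invoking Lemma~\ref{lemma: first soundness} (Campbell's ``soundness for free'' for homological products) applied to the two constituent products $J$ and $K$, whose boundary maps are precisely the blocks appearing in $\partial_{2}[Q]$ and $\partial_{3}[Q]$ of \cref{eq:21}. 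You instead decline to use the lemma as a black box, propose to re-derive it for the middle slice of the length-4 complex via a K\"unneth-style analysis, and then explicitly defer that derivation (``this bookkeeping \dots is where the real work of the proof lives''), so the central quantitative claim is only asserted to ``transfer.'' Moreover, the mechanism you sketch is not the one that drives the lemma: lower-bounding the minimum weight of direct-product summands of $\ker\partial_{2}[Q]$ via Lemma~\ref{lem:direct-product} is a distance statement, whereas $(t,f)$-soundness requires exhibiting, for every syndrome of weight at most $t$, \emph{some} preimage of weight at most $f(\cdot)$; the $x^{2}/4$ bound in Campbell's argument comes from constructing such a low-weight preimage using the row/column structure of the product, not from kernel weights. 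Your instinct that the lemma, as stated for products of two classical codes, does not literally cover a length-4 complex is a fair observation about rigor, but flagging it without closing it leaves the proof incomplete where the paper's citation does the work.

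A smaller but real flaw is the recombination step. From $|E|^{\mathrm{red}}\le|e_{X}|^{\mathrm{red}}+|e_{Z}|^{\mathrm{red}}$ and the two sector bounds you get $|E|^{\mathrm{red}}\le f(|s_{X}|)+f(|s_{Z}|)$, and to conclude $(t,f)$-soundness of the composite map you still need
\[
f(|s_{X}|)+f(|s_{Z}|)\le f\bigl(|s_{X}|+|s_{Z}|\bigr),
\]
i.e., superadditivity of $f(x)=x^{2}/4$ (which holds because $f$ is convex with $f(0)=0$, and is exactly how the paper closes the argument). ``Taking minima over the two thresholds and the two functions'' does not supply this inequality, and for a general increasing $f$ the composite bound would fail; you should state and use the superadditivity explicitly.
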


\begin{proof}
For the SEHGP complex we have two syndrome maps,
$\partial_{2}:Q_{2}\to Q_{1}$ and
$\partial_{3}^{T}:Q_{2}\to Q_{3}$.
Lemma~\ref{lemma: first soundness} applied to the two constituent
hypergraph products $J$ and $K$ tells us that each map is
$(t,f)$ sound with the same parameters
$t=\min\{d(X),d(Y),d(Z),d(W)\}$ and $f(x)=x^{2}/4$.

Write an error operator as $E=E_{X}\oplus E_{Z}$, where
$E_{X}$ is the part detected by $\partial_{2}$ and
$E_{Z}$ the part detected by $\partial_{3}^{T}$.  Its full syndrome is
\[
\begin{aligned}
\sigma(E)
  &= \bigl(\partial_{2}E_{X}\bigr)\oplus
     \bigl(\partial_{3}^{T}E_{Z}\bigr).
\end{aligned}
\]
Because weight is additive over a direct sum we have
$|\sigma(E)|=|\partial_{2}E_{X}|+|\partial_{3}^{T}E_{Z}|$.
If the depolarizing channel produces a syndrome of weight at most $t$,
then each component weight is also at most $t$.
Soundness of the individual maps gives pre-images
$R_{X},R_{Z}\in Q_{2}$ with
\[
\begin{aligned}
|R_{X}| &\le f(|\partial_{2}E_{X}|),\\
|R_{Z}| &\le f(|\partial_{3}^{T}E_{Z}|).
\end{aligned}
\]
Setting $R=R_{X}+R_{Z}$ recovers the full error and satisfies
\[
\begin{aligned}
|R|
  &\le |R_{X}|+|R_{Z}|
   \le f(|\partial_{2}E_{X}|)+f(|\partial_{3}^{T}E_{Z}|)\\
  &\le f\bigl(|\partial_{2}E_{X}|+|\partial_{3}^{T}E_{Z}|\bigr)
   = f\bigl(|\sigma(E)|\bigr),
\end{aligned}
\]
because $f(x)=x^{2}/4$ is convex and monotonically increasing.
Hence, the direct sum
$\partial_{2}\oplus\partial_{3}^{T}$ is $(t,f)$-sound.

When the channel produces only $X$ errors, the $\partial_{3}^{T}$ branch
never activates, so the syndrome reduces to $\partial_{2}E_{X}$ and the same
$(t,f)$ bound follows from the individual soundness of $\partial_{2}$.
The $Z$-only case is identical with the roles of the two maps swapped.
\end{proof}

\subsection{Bias-tailoring and commutation-preserving Hadamard rotation}

\DL{I tried to compress this section while preserving the content. See if you agree. The original text appears below, commented out.}

Hadamard rotation turns the ordinary surface code into the XZZX code by
interchanging some stabilizer supports between $X$ and $Z$.  We now
extend this idea to a broader class of codes via
\emph{bias-tailoring}.  Our starting point is the HGP code, of which the surface code is a special case.  Roffe
\emph{et al.} performed such a swap in \cite{roffe2023bias}, but without
an explicit algebraic justification.  We give that justification and
formulate a \emph{commutation-preserving Hadamard rotation} (CPHR) rule
that generalizes to SEHGP codes.

Write the $X$- and $Z$-type stabilizer blocks of an HGP code as
\[
H_{HGP}
=
\bigl[ H_X  \bigm|  H_Z \bigr]
=
\begin{bmatrix}
  \mathcal X_{1,1} & \mathcal X_{1,2} & \aug &
  \mathcal Z_{1,1} & \mathcal Z_{1,2} \\
  \mathcal X_{2,1} & \mathcal X_{2,2} & \aug &
  \mathcal Z_{2,1} & \mathcal Z_{2,2}
\end{bmatrix},
\]
with
\[
\mathcal X_{1,1}=0, 
\mathcal X_{1,2}=0, 
\mathcal Z_{1,1}=I_{n_1}\otimes H_2, 
\mathcal Z_{1,2}=H_1^T\otimes I_{m_2},
\]
\[
\mathcal X_{2,1}=H_1\otimes I_{n_2}, 
\mathcal X_{2,2}=I_{m_1}\otimes H_2^T, 
\mathcal Z_{2,1}=0, 
\mathcal Z_{2,2}=0.
\]

Throughout, we assume that the constituent classical codes satisfy the
orthogonality condition
$H_1 H_2^T=0$, which guarantees the original symplectic
commutation relation
$\mathcal X \mathcal Z^T=0$.

Let $R_k(H_{HGP})$ denote block row $k$.  Using
$R_2\bigl(H_{HGP}\bigr)$ and $R_1\bigl(H_{HGP}\bigr)$ we find
\[
\begin{aligned}
  R_2(H_{HGP}) R_1^T(H_{HGP})
  &= \mathcal X_{2,1}\mathcal Z_{1,1}^T
   + \mathcal X_{2,2}\mathcal Z_{1,2}^T \\
  &= H_1\otimes H_2^T
   + H_1 \otimes H_2^T = 0,
\end{aligned}
\]
so blocks commute as required.

We now define CPHR as exchanging one block column while leaving the block-row products above unchanged. More specifically, we define two types:

\emph{Type I CPHR (T1).}  Swap $\mathcal X_{2,1}\leftrightarrow
\mathcal Z_{1,1}$.  The new check matrix is
\[
H_{HGP}'
=
\begin{bmatrix}
  I_{n_1} \otimes H_2 & 0        & \aug & 0                  &
  H_1^T \otimes I_{m_2} \\
  0                     & I_{m_1} \otimes H_2^T &
  \aug &
  H_1 \otimes I_{n_2} & 0
\end{bmatrix},
\]
with updated logical blocks
\[
L_{HGP}'
=
\begin{bmatrix}
  \mathcal L_{Z1} & 0         & \aug & 0         & \mathcal L_{Z2} \\
  0               & \mathcal L_{X2} & \aug & \mathcal L_{X1} & 0
\end{bmatrix}.
\]

\emph{Type II CPHR (T2).}  Swap $\mathcal X_{2,2}\leftrightarrow
\mathcal Z_{1,2}$.  Now the new check matrix is
\[
H_{HGP}''
=
\begin{bmatrix}
  0                       & H_1^T \otimes I_{m_2} &
  \aug &
  I_{n_1} \otimes H_2 & 0 \\
  H_1 \otimes I_{n_2}  & 0 & \aug &
  0                       & I_{m_1} \otimes H_2^T
\end{bmatrix},
\]
exactly matching the bias-tailored matrix of \cite{roffe2023bias}.  The
logical operators become
\[
\begin{aligned}
L_{HGP}''
  &=
  \begin{bmatrix}
    0         & \mathcal L_{Z2} & \aug & \mathcal L_{Z1} & 0 \\
    \mathcal L_{X1} & 0         & \aug & 0               & \mathcal L_{X2}
  \end{bmatrix}.
\end{aligned}
\]

If $H_1 H_2^T=0$, then swapping either block column as in T1 CHPR
or T2 CPHR preserves the symplectic condition
$\mathcal X \mathcal Z^T=0$.
(The proof is a direct substitution in the block-row product shown
above.)

In $H_{HGP}''$ the $X$-type blocks form
$n_2$ disjoint copies of $H_1$ and $m_2$ disjoint copies of
$H_1^T$ along the block anti-diagonal, while the $Z$-type
blocks form disjoint copies along the block diagonal.
For example, with
$H_1=\bigl[\begin{smallmatrix}1&1&0\\0&1&1\end{smallmatrix}\bigr]$
and $m_1=3$,
\[
H_1 \otimes I_{n_2}
=
\begin{bmatrix}
  I_{n_2} & I_{n_2} & 0\\
  0       & I_{n_2} & I_{n_2}
\end{bmatrix}, 
I_{m_1} \otimes H_1^T
=
\begin{bmatrix}
  H_1^T & 0           & 0 \\
  0               & H_1^T & 0 \\
  0               & 0           & H_1^T
\end{bmatrix}.
\]
Thus each diagonal (or anti-diagonal) block acts on an independent
subset of qubits.

If the classical codes
$\{H_1,H_1^T,H_2,H_2^T\}$ each has 50\% threshold
(e.g., repetition codes, whose transpose is itself), the bias-tailored HGP inherits that threshold under infinitely biased noise because it is the direct sum of those disjoint classical instances
\cite{tuckett2019tailoring,bonilla2021xzzx}.
For a surface-code instance ($H_1=H_2$ repetition), the XZZX code
contains $2(n_2+m_2)$ diagonal repetition chains, so its threshold is again 50\%.

\subsection{Bias-tailoring the SEHGP code}
\DL{Same: I tried to compress this section while preserving the content. See if you agree. The original text appears below, commented out.}

The \emph{bias-tailored SEHGP code}, which we call the BSH code for short, is obtained by applying \emph{both} CPHR swaps
(T1 and T2) to the SEHGP stabilizer matrix.  In binary form the
stabilizers are
\[
  H_{SH}=\bigl[ H_X  \bigm|  H_Z \bigr],
\]
with
\[
\begin{aligned}
H_X &=
\begin{bmatrix}
  \mathcal O \\ \partial_2
\end{bmatrix}
=
\begin{bmatrix}
  0 & 0 & 0 \\
  0 & 0 & 0 \\
  \I \otimes \partial_2[K] & \partial_1[J] \otimes \I & 0 \\
  0 & \I \otimes \partial_1[K] & \partial_2[J] \otimes \I
\end{bmatrix}, \label{eq:SHHX}
\end{aligned}
\]
and
\[
\begin{aligned}
H_Z &=
\begin{bmatrix}
  \partial_3^T \\ \mathcal O
\end{bmatrix}
=
\begin{bmatrix}
  0 & \partial_2^T[J] \otimes \I &
      \I \otimes \partial_1^T[K] \\
  \partial_1^T[J] \otimes \I &
      \I \otimes \partial_2^T[K] & 0 \\
  0 & 0 & 0 \\
  0 & 0 & 0
\end{bmatrix}. \label{eq:SHHZ}
\end{aligned}
\]

Let us verify the commutation relation (row 3 vs row 2):
\[
\begin{aligned}
&R_3(H_{SH}) R_2^T(H_{SH})
  = \X_{3,1}\Z_{2,1}^T
     +\X_{3,2}\Z_{2,2}^T \\
  &\quad = \I \otimes \partial_2[K] 
     \partial_1[J] \otimes \I
     +\partial_1[J] \otimes \I 
      \I \otimes \partial_2[K] \\
  &\quad = 2 \partial_1[J] \otimes \partial_2[K]=0.
\end{aligned}
\]
The last equality holds over $\F_2$.

Next, let us apply the T2 CPHR to blocks $\X_{3,2},\Z_{3,2}$ and
$\X_{2,2},\Z_{2,2}$:
\[
  \X_{3,2}' \gets \Z_{3,2}, 
  \Z_{3,2}' \gets \X_{3,2}, 
  \X_{2,2}' \gets \Z_{2,2}, 
  \Z_{2,2}' \gets \X_{2,2}.
\label{eq:t2assign}
\]

Now, let us verify the commutation relations again (row 4 vs row 1):
\[
\begin{aligned}
&R_4(H_{SH}) R_1^T(H_{SH})
 = \X_{4,2}\Z_{1,2}^T+\X_{4,3}\Z_{1,3}^T \\
 &\quad = \I \otimes \partial_1[K] 
    \partial_2[J] \otimes \I
  + \partial_2[J] \otimes \I 
    \I \otimes \partial_1[K]                         \\
 &\quad = 2 \partial_2[J] \otimes \partial_1[K]=0.
\end{aligned}
\]

Because the T2 CPHR swap removed the $X$-$Z$ balance in the second block
column, we now swap the first block column in the complementary block
rows:
\[
  \X_{4,2}' \gets \Z_{4,2}, 
  \Z_{4,2}' \gets \X_{4,2}, 
  \X_{1,2}' \gets \Z_{1,2}, 
  \Z_{1,2}' \gets \X_{1,2}.
\label{eq:t1assign}
\]
A direct multiplication shows
\[
\begin{aligned}
R_3(H_{SH}') R_2^T(H_{SH}')  &=0,\\
R_4(H_{SH}') R_1^T(H_{SH}')  &=0,
\end{aligned}
\]
so commutation is preserved, as required.
\Cref{eq:t2assign,eq:t1assign} together specify the stabilizers of the
standard BSH code.

With $H_X'$ and $H_Z'$ defined, we are ready to define the parity-check matrices $H_{SX}'$ and $H_{SZ}'$ of the $X$ and $Z$ syndromes. They should satisfy $H_{SX}'H_X^{\prime T}=0$ and $H_{SZ}'H_Z^{\prime T}=0$ so that every row of $H_{SX}'$ ($H_{SZ}'$) is orthogonal to every
column of $H_X'$ ($H_Z'$). Moreover, since the $X$ and $Z$ syndromes are concatenations of the syndromes by the disjoint submatrices within $H_Z'$ and $H_X'$, $H_{SX}'$ and $H_{SZ}'$ must also be disjoint block matrices, and $H_{SX}' H_X'=0$ and $H_{SZ}'H_Z'=0$ must hold block-wise. Otherwise, if two sub-syndromes are each corrupted in one place, this total syndrome may still be considered valid. Below is an example of a bad construction:
\[\begin{aligned}
          & H_{SX}'=H_{SZ}'=[\partial_4^T\ \partial_1]                                                               \\
        &= [\partial_2^T[J]\otimes \I\ \I\otimes\partial_2^T[K]\ \I \otimes \partial_1[K]\ \partial_1[J]\otimes \I]
    \end{aligned}\]

Instead, to block-wise map $H_Z'$ and $H_X'$ to 0, we adopt:
\[
\begin{aligned}
    H_{SX}'&=\begin{bmatrix}
        \partial_2^T[J]\otimes \I & \I\otimes\partial_2^T[K] & 0                        & 0                        \\
        0                         & 0                        & \I \otimes \partial_1[K] & 0                        \\
        0                         & 0                        & 0                        & \partial_1[J] \otimes \I
    \end{bmatrix}\\
H_{SZ}'&=\begin{bmatrix}
        0                         & 0                        & \I \otimes \partial_1[K] & \partial_1^T[J] \otimes \I \\
        0                         & \I\otimes\partial_2^T[K] & 0                        & 0                        \\
        \partial_2^T[J]\otimes \I & 0                        & 0                        & 0                        \\
    \end{bmatrix}.
\end{aligned}
\]

Now, we give the parameters of the BSH code under the simplified analysis assumption of four identical codes (recall \cref{sec:identical-codes}). Its length, number of logical qubits and distance are obviously the same as the SEHGP code, so we only discuss its single-shot distance, soundness and threshold. 

\paragraph*{Single-shot distance.}
We discuss the distance of the code defined by $H_{SX}'$. The distance of $H_{SZ}'$ follows by symmetry.
By construction, it consists of three disjoint codes:
\[[\partial_2^T[J]\otimes \I \ \I\otimes\partial_2^T[K]],\ \I \otimes \partial_1[K],\ \partial_1[J] \otimes \I.\]
Thus, its distance is the smallest distance of the three codes. The code given by $\I \otimes \partial_1[K]$ is multiple copies of $\partial_1[K]$, so 
\[
\begin{aligned}
    d(\I \otimes \partial_1[K])&=d(\partial_1[K])\\
    &=\min\{d(Z),d(W),d^T(Z),d^T(W)\}.
\end{aligned}
\]
Similarly,
\[
\begin{aligned}
d(\partial_1[J] \otimes \I)&=d(\partial_1[J])\\
&=\min\{d(X),d(Y),d^T(X),d^T(Y)\}.
\end{aligned}
\]
To consider the distance of $[\partial_2^T[J]\otimes \I \ \I\otimes\partial_2^T[K]]$,
let $\partial_2^T[K]$ and $\partial_2^T[J]$ define two length-1 chain complexes
\[L=L_1 \xrightarrow{\partial_2^T[J]} L_0,\ R=R_1 \xrightarrow{\partial_2^T[K]} R_0.\]
Let $M=L \otimes R$ such that \[M=M_2 \xrightarrow{\partial_2[M]}M_1 \xrightarrow{\partial_1[M]} M_0,\]
where $\partial_1[M]=[\partial_2^T[J]\otimes \I \ \I\otimes\partial_2^T[K]]$. According to Ref.~\cite{Tillich13}, its distance is 
\[\min\{d(\partial_2[J]), d(\partial_2[K]),d^T(\partial_2[J]), d^T(\partial_2[K])\},
\]
i.e., by 
\[
\begin{aligned}
\min\{&d(X),d(Y),d^T(X),d^T(Y),\\
& d(Z),d(W),d^T(Z),d^T(W)\}.
\end{aligned}
\]
Because all four base codes are identical in our simplified analysis, each classical distance appearing in the minimum equals $d$; hence the single-shot distance of the BSH code is also $d$.

\paragraph*{Soundness.}
Now, we analyze the soundness of the code defined by $H_X'$; the analysis of the code defined by $H_Z'$ is similar. First, note that $H_X'$ consists of three disjoint codes whose parity-check matrices are: 
\[
A:=\I\otimes \partial_2[K],\ B:=\begin{bmatrix} \I \otimes \partial_2^T[K] \\ \partial_2^T[J]\otimes \I \end{bmatrix},\ C:=\partial_2[J]\otimes \I,
\]
The soundness of the code defined by $H_X'$ is determined by the least sound code among the three. The soundness of $A$ and $C$ can be determined by the following lemma, inspired by Claim 3 of~\cite{campbell2019theory}.
\begin{mylemma}[Inheritance of soundness]
    \label{lem:inheritance}
    If $\partial$ is $(t,f)$ sound, $\partial\otimes I_{n \times n}$ and $I_{n \times n} \otimes \partial$ are $(t,f)$ sound.
\end{mylemma}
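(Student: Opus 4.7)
The plan is to decompose the tensor-product map into $n$ independent copies of $\partial$ acting on separate blocks, apply the soundness of $\partial$ to each block, and then stitch the per-block pre-images together using a superadditivity property of $f$.

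First I would identify the domain of $\partial\otimes I_{n\times n}$ with a matrix space: if $\partial:\F^{a}\to\F^{b}$, then any $E\in\F^{an}$ can be reshaped into a matrix whose columns $E_{1},\ldots,E_{n}$ lie in $\F^{a}$, and $(\partial\otimes I_{n\times n})E$ corresponds to applying $\partial$ to each column independently. The syndrome weight therefore decomposes additively,
\[
  \bigl|(\partial\otimes I_{n\times n})E\bigr| \;=\; \sum_{j=1}^{n}|\partial E_{j}|,
\]
and the statement for $I_{n\times n}\otimes\partial$ follows by the same argument applied to rows, or equivalently by composing with the canonical swap of tensor factors, which is a coordinate permutation and hence preserves all Hamming weights.

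Next, assuming $|(\partial\otimes I_{n\times n})E|\le t$, each column syndrome satisfies $|\partial E_{j}|\le t$, so the $(t,f)$ soundness of $\partial$ supplies, for every $j$, a vector $F_{j}\in\F^{a}$ with $\partial F_{j}=\partial E_{j}$ and $|F_{j}|\le f(|\partial E_{j}|)$. Stacking the columns $F_{j}$ produces $F$ satisfying $(\partial\otimes I_{n\times n})F=(\partial\otimes I_{n\times n})E$, so $F$ is an admissible reduced-weight competitor for $E$. To collapse the column-wise bounds into a single bound of the form required by the definition of soundness, I would invoke the superadditivity inequality $\sum_{j}f(x_{j})\le f(\sum_{j}x_{j})$, valid for any convex $f$ with $f(0)=0$ and in particular for the quadratic $f(x)=x^{2}/4$ supplied by \cref{lemma: first soundness}. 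Applied with $x_{j}=|\partial E_{j}|$ this yields
\[
  |F|=\sum_{j}|F_{j}|\;\le\;\sum_{j}f(|\partial E_{j}|)\;\le\;f\bigl(|(\partial\otimes I_{n\times n})E|\bigr),
\]
which is exactly the $(t,f)$ soundness of $\partial\otimes I_{n\times n}$.

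The main obstacle is this last step: superadditivity of $f$ is not part of the bare definition of soundness, where only monotonicity and $f(0)=0$ are assumed. I would therefore either promote convexity to an explicit hypothesis on $f$, or simply observe that every instance of $f$ appearing in this paper and in Campbell's construction that inspires the lemma is the quadratic $x^{2}/4$, for which convexity (and hence superadditivity) are automatic.
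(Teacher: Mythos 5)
Your proof is correct for the situations in which the paper actually uses the lemma, but it follows a genuinely different route from the paper's own argument. You decompose the domain into $n$ columns, apply the $(t,f)$ soundness of $\partial$ to \emph{every} column to get pre-images $F_j$ with $\partial F_j=\partial E_j$ and $|F_j|\le f(|\partial E_j|)$, and then reassemble a low-weight pre-image of the full syndrome, paying for the reassembly with the superadditivity of $f$. The paper instead selects only the single heaviest column $i_0$ and exhibits one vector $r=\gamma_{i_0}\otimes e_{i_0}$ of weight at most $f(|q|)$; as written that vector is annihilated by $\partial\otimes I_n$ (or, fixing the apparent typo, reproduces only the $i_0$ column of the syndrome), so it never certifies a low-weight pre-image of the \emph{entire} syndrome $q$, which is what the definition of soundness demands. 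In that sense your all-columns reconstruction is the more complete argument and matches the definition directly, at the cost of needing an extra property of $f$; the paper's single-column shortcut avoids any assumption on $f$ but does not close the gap between one column and the whole syndrome.

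Your caveat about superadditivity is not merely cosmetic: for a monotone but concave $f$ (take $f(x)=\sqrt{x}$ and a map $\partial$ that saturates the bound on some syndrome $\alpha$), a syndrome with two such nonzero columns forces a minimal pre-image of weight about $2f(x)$, which exceeds $f(2x)$, so the lemma as stated for arbitrary increasing $f$ with $f(0)=0$ fails. Your proposed fix—either adding convexity (hence superadditivity, via $f(x)\le\frac{x}{x+y}f(x+y)$ and its symmetric counterpart) as a hypothesis, or noting that every invocation in this paper uses the convex $f(x)=x^{2}/4$ coming from \cref{lemma: first soundness}—is exactly the right repair, and with it your argument is sound, including the reduction of the $I_{n\times n}\otimes\partial$ case to the $\partial\otimes I_{n\times n}$ case by a weight-preserving permutation of tensor factors.
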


\begin{proof}
We prove the claim for $\partial\otimes I_{n \times n}$; the result for $I_{n \times n} \otimes \partial$ follows by symmetry.

Let $q\in\im(\partial\otimes I_{n})$ with $|q|\le t$.  Because
$q=(\partial\otimes I_{n})(E)$ for some $E$, we may decompose
\[
  q=\sum_{i=1}^{n} \alpha_{i}\otimes e_{i},
\]
where $\{e_{i}\}$ are the standard basis vectors of $\mathbb F_{2}^{n}$
and each $\alpha_{i}\in\im\partial$ satisfies $|\alpha_{i}|\le t$.
Choose an index
\[
  i_{0} \in \arg\max_{1\le i\le n}|\alpha_{i}|.
\]
Because $\partial$ is $(t,f)$ sound, there exists
$\gamma_{i_{0}}\in\ker\partial$ with
$
  |\gamma_{i_{0}}|\le f(|\alpha_{i_{0}}|).
$
Set $\gamma_{i}=0$ for all $i\ne i_{0}$ and define
\[
  r=\gamma_{i_{0}}\otimes e_{i_{0}}.
\]
Then
\[
\begin{aligned}
  (\partial\otimes I_{n})(r) &= \partial(\gamma_{i_{0}})\otimes e_{i_{0}}
                              = 0,\\
  |r| &= |\gamma_{i_{0}}|
        \le f(|\alpha_{i_{0}}|)
        \le f\!\bigl(|q|\bigr),
\end{aligned}
\]
so $r$ satisfies the required inequality
$f(|q|)\ge|r|$.
Therefore, $\partial\otimes I_{n}$ is $(t,f)$ sound.
\end{proof}

According to the first soundness lemma, let $\partial_2[K]$ be $(t_K,f)$ sound and $\partial_2[J]$ be $(t_J,f)$ with $f(x)=x^2/4$. Then, $\I\otimes \partial_2[K]$ and $\partial_2[J]\otimes \I$ are $(t_K,f)$ and $(t_J,f)$ sound respectively. Recall that $J=X \otimes Y$ and $K=Z \otimes W$, then $t_J=\min\{d(X),d(Y)\}$ and  $t_K=\min\{d(Z),d(W)\}$.

Now, we discuss the soundness of $B$, which is the submatrix in the middle of $H_X'$. We use the $L$, $R$, and $M$ chain complex setting from our discussion in the single-shot distance.
Note that $\partial_2[M]=\begin{bmatrix} \I \otimes \partial_2^T[K] \\ \partial_2^T[J]\otimes \I \end{bmatrix} = B$.

The first soundness lemma states that $\partial_2[M]$ has good soundness and is $(t,f)$ sound with $t = \min \{d(\partial_2^T[J]), d(\partial_2^T[K])\}$ and $f(x)=x^2/4$, where~\cite{Tillich13}
\[
\begin{aligned}
 d(\partial_2^T[J])&=\min\{d(X),d(Y),d^T(X),d^T(Y)\}\\
d(\partial_2^T[K])&=\min\{d(Z),d(W),d^T(Z),d^T(W)\},
\end{aligned}
\]
Therefore, the $t$ parameter of the soundness of $H_X'$ is equal to 
\[
\begin{aligned}
    \min\{&d(X),d(Y),d^T(X),d^T(Y),\\
    &d(Z),d(W),d^T(Z),d^T(W)\},
\end{aligned}
\]
which is $d$ under our simplified analysis assumptions. Therefore, $H_X'$ and $H_Z'$ are $(d,x^2/4)$ sound, i.e., have good soundness. 

\paragraph*{Threshold.}
Under pure $X$ or pure $Z$ noise, the BSH code decomposes into independent copies of smaller classical codes, each of which presumably has a higher threshold. Consequently, we expect the BSH family to inherit a higher threshold in the infinite-bias limit than the original, non-bias-tailored SEHGP code; numerical verification is left for future work.

\subsection{Simplified and Bias-Tailored Simplified Syndrome-Encoded Hypergraph Product Codes}
\label{sec:simplified}

In this section, we use our simplified construction to obtain the \emph{simplified} SH (SSH) code and the \emph{bias-tailored} SSH (BSSH) code.

The first soundness lemma shows that in the length-2 chain complex
\[J=J_2 \xrightarrow{\partial_2[J]} J_1 \xrightarrow{\partial_1[J]} J_0,\]
$\partial_2[J]$ and $\partial_1^T[J]$ are $(t,f)$ sound where $t=\min\{d(X),d(Y)\}$ and $f(x)=x^2/4$. 
If $d(X)$ and $d(Y)$ scale polynomially with the code length of $X$ and $Y$, respectively, we already have good soundness. To preserve these soundness properties,
we define two classical linear $[n,k,d]$ codes with $\partial_2[J]$ and $\partial_1^T[J]$ represented by two length-1 chain complexes. Although we could, we do not mix and match the boundary operators from $K$ in this context for simplicity. As a result, we omit the $[J]$ label on all the boundary operators. The two length-1 chain complexes are:
\[
F=F_1 \xrightarrow{\partial_2} F_0,\ G=G_1 \xrightarrow{\partial_1^T} G_0,
\]
where, using \cref{eq:37},
\[
\label{eq:81}
\dim F_{1}=n^{2} = \dim G_{0},\qquad  \dim F_{0}= 2n^{2} = \dim G_{1} .
\]

The homological product code $P=F\otimes G$ is the \emph{simple syndrome-encoded hypergraph product} (SSH) code. It gives a 2-step complex
\[
  P_{2}\xrightarrow{\partial_{2}[P]}P_{1}
        \xrightarrow{\partial_{1}[P]}P_{0},
\]
with graded dimensions
\[
\begin{aligned}
\dim P_{2} &= \dim(F_{1}\otimes G_{1})
           = n^{2} \cdot 2n^{2}=2n^{4},\\
\dim P_{1} &= \dim(F_{0}\otimes G_{1})
           + \dim(F_{1}\otimes G_{0})                          \\
           &= 2n^{2} \cdot 2n^{2}+n^{2} \cdot n^{2}=5n^{4},\\
\dim P_{0} &= \dim(F_{0}\otimes G_{0})
           =2n^{2} \cdot n^{2}=2n^{4}.
\end{aligned}
\]
This homological product has stabilizer generators
\[
\label{eq:82}
    \begin{aligned}
        H_{SSH}&= [H_X|H_Z] \\
        =        & \begin{bmatrix}
                       0 & 0 &\aug & \partial_2 \otimes \I & \I \otimes \partial_1^T \\ 
                       \I \otimes \partial_1^T & \partial_2 \otimes \I &\aug &0 &0  \end{bmatrix} .
    \end{aligned}
    \]

We also bias-tailor the SSH code by applying T2 CPHR to the SSH code (the choice of T2 type is arbitrary). Doing so, we obtain
\[
    \begin{aligned}
        H_{BSSH}&= [H_X|H_Z] \\
        =        & \begin{bmatrix}
                       0 & \I \otimes \partial_1^T &\aug & \partial_2 \otimes \I & 0 \\ 
                       \I \otimes \partial_1^T & 0 &\aug &0 & \partial_2 \otimes \I  \end{bmatrix},
    \end{aligned}\]
which defines the \emph{bias-tailored SSH} (BSSH) code.

Recall the boundary operators requirement that $\partial_1\partial_2=\partial_2^T\partial_1^T=0$.
Using this rule, we can easily derive the parity-check matrices of the $X$ and $Z$ syndromes of the BSSH code.
\[H_{SX}=\begin{bmatrix}
    0 & \I \otimes \partial_2^T \\
    \I \otimes \partial_2^T  & 0\\
\end{bmatrix},\ 
H_{SZ}=\begin{bmatrix}
    \partial_1 \otimes \I & 0 \\
    0 & \partial_1 \otimes \I  \\
\end{bmatrix}.
\]

Next, we give the parameters of the SSH code. The parameters of the BSSH code are identical. 

\paragraph*{Number of physical qubits.} Recall that the physical qubits occupy the middle degree of the length-2 chain complex. Thus, 
the number of physical qubits is $\dim P_{1} = 5n^4$.

\paragraph*{Number of logical qubits.} Recall that the codes defined by $\partial_1^T$ and $\partial_2$ are both direct product codes (Claim~\ref{claim:1}). 
By Lemma~\ref{lem:direct-product}, these codes each encode $k^2$ logical bits. Thus, their homological product code encodes $(k^2)^2+(k^2)^2=2k^4$ logical qubits.

\paragraph*{Number of check operators.}
The number of check operators is the sum of the top and bottom degrees of the length-2 chain complex (the number of $X$- and $Z$-checks), i.e., $\dim P_2+\dim P_0$, or the row count of \cref{eq:82}, i.e.,
$2\times 2n^{4} = 4n^{4}$.

\paragraph*{Code distance.}  
By Lemma~\ref{lem:direct-product}, $d(\partial_2)=d(X)d(Y)$ and $d(\partial_1^T)=d^T(X)d^T(Y)$. Thus, the homological product of $\partial_2$ and $\partial_1^T$ has distance $\min\{d(X)d(Y),d^T(X)d^T(Y)\}=d^2$.

\paragraph*{Single-shot distance.} 
We analyze the single-shot distance of $H_{SZ}$; that of $H_{SX}$ follows by symmetry. Notice that $H_{SZ}$ consists of two disjoint copies of $\partial_1 \otimes \I$, and $\partial_1 \otimes \I$ is essentially multiple copies of $\partial_1$. Then, $d(H_{SZ})=d(\partial_1)$. Recall that $\partial_1$ is also the $X$ stabilizer generator matrix if we identify $J$ as an HGP code. Then $d(H_{SZ})=\min\{d(X),d(Y)\}=d$.

\paragraph*{Soundness.}
Under a depolarizing channel, our present arguments do not prove that the
SSH or BSSH families are $(t,f)$-sound: the mixed blocks in their
syndrome maps prevent a direct application of
Lemma~\ref{lemm:partial}, which guarantees soundness only for the
direct-sum syndrome map $\partial_2 \oplus \partial_1^{T}$.
In the pure $Z$ (or pure $X$) setting the BSSH stabilizer matrix
decouples, because only its $H_X$ block acts on errors; that block is two disjoint copies of $\I \otimes \partial_1^{T}$.
Since $\partial_1^{T}$ is $(t,f)$ sound, Lemma~\ref{lem:inheritance}
implies that $\I \otimes \partial_1^{T}$ is $(t,f)$-sound with the same function $f(x)=x^{2}/4$.  Hence, the BSSH code retains good soundness
whenever exactly one Pauli error type is present.  The SSH code lacks this guarantee, whereas the reduced SH (RSH) code we discuss in \cref{sec:reduced} below regains good soundness even for depolarizing noise.

\paragraph*{Threshold.}
Under pure $X$ or $Z$ noise, since the BSSH code is bias-tailored and consists of multiple copies of smaller codes, we expect the BSSH code to have a higher threshold than the SSH code. 

Under depolarizing noise, the BSSH code reduces to the SSH code. Hence, we expect them to have the same threshold.

\paragraph*{Summary.}
Recall from \cref{sec:identical-codes} that the SEHGP code family has parameters $[[6n^{4},6k^{4},d]]$.

The SSH construction reduces the qubit count from $6n^{4}$ to $5n^{4}$ and the number of stabilizer measurements from $8n^{4}$ to $4n^{4}$. It lifts the distance from $d$ to $d^{2}$.  At the same time, the number of logical qubits drops by a factor of three from $6k^{4}$ in the SEHGP family to $2k^{4}$ in the SSH and BSSH families. Under pure $X$ or $Z$ noise, the SSH code and the BSSH code achieve the same soundness as the SEHGP code. The threshold of the BSSH code is expected to be higher than the SSH code.

\subsection{Reduced and Bias-Tailored Reduced Syndrome-Encoded Hypergraph Product Codes}
\label{sec:reduced}

Recall (Corollary~\ref{cor:SEHGP-soundness}) that the SEHGP code has good soundness under depolarizing noise as well as under pure $X$ or $Z$ noise, which is advantageous.  Meanwhile, if the original base codes of the SEHGP code use $n$ qubits, the latter uses $6n^4$ qubits, which is a relatively high overhead. In this section, we propose a reduced construction that uses fewer qubits. This is accomplished by sacrificing the good soundness property under pure $X$ or $Z$ noise, which is justified, as this type of noise is very rare in practice.

Recall that \cref{eq:SHHX,eq:SHHZ} give the $X$ and $Z$ stabilizer generators of the SEHGP code. In particular, the third row of \cref{eq:SHHX} commutes with the second row of \cref{eq:SHHZ}.
The fourth row of \cref{eq:SHHX} commutes with the second row of \cref{eq:SHHZ}.
Hence, we can use them to define the \emph{reduced syndrome-encoded hypergraph product} (RSH) code with a pair of $X$ and $Z$ stabilizers $(H_{RSH1}, H_{RSH2})$ by truncating their zero blocks. We call the result RSH1 and RSH2 codes. These codes have the following stabilizer generators: 
\[
\begin{aligned}
H_{RSH1}&=[H_{RX1}|H_{RZ1}]\\
&=
\begin{bmatrix}
        0                        & 0                       & \aug & \partial_1^T[J] \otimes \I & \I \otimes \partial_2^T[K] \\
        \I \otimes \partial_2[K] & \partial_1[J]\otimes \I & \aug & 0                          & 0
    \end{bmatrix},
\end{aligned}
\]
and 
\[
\begin{aligned}
H_{RSH2}&=[H_{RX2}|H_{RZ2}]\\
&=
\begin{bmatrix}
        0                       & 0                       & \aug & \partial_2^T[J]\otimes \I & \I \otimes \partial_1^T[K] \\
        \I\otimes \partial_1[K] & \partial_2[J]\otimes \I & \aug & 0                         & 0
    \end{bmatrix}
\end{aligned}
\]

We now argue that \emph{at least one} of the two reduced codes,
RSH1 or RSH2, inherits the $(t,f)$-soundness of the full BSH code under
a noise model in which both $X$ and $Z$ errors occur.
Intuitively, every syndrome error activates either the ``left'' or ``right'' checks, so one of the two reduced matrices is
guaranteed to supply the necessary redundancy.
The formal proof is deferred to \cref{app:rsh}.

The drawback of the reduced construction is that we can no longer read
the syndrome parity-check matrices directly from a chain complex.
Nonetheless, they are easy to obtain.  Define $H_{RS1}$ and $H_{RS2}$ by
\[
H_{RS1}H_{RSH1}=0,\qquad
H_{RS2}H_{RSH2}=0 ,
\]
equivalently
\[
\ker H_{RS1}=\im H_{RSH1},\qquad
\ker H_{RS2}=\im H_{RSH2}.
\]
Because the kernel of a matrix is the orthogonal complement of its row
space, it suffices to take row-space complements of
$\im H_{RSH1}$ and $\im H_{RSH2}$ to build
$H_{RS1}$ and $H_{RS2}$.

The RSH code can also be bias-tailored.  Applying a type-T2
CPHR to $H_{RSH1}$ gives the
bias-tailored matrix
\[
\begin{bmatrix}
0 & \I\otimes\partial_2^{T}[K] & \aug &
  \partial_1^{T}[J]\otimes\I & 0 \\
\I\otimes\partial_2[K] & 0 & \aug &
  0 & \partial_1[J]\otimes\I
\end{bmatrix}
\]
and defines the bias-tailored reduced code BRSH1.  An analogous
rotation on $H_{RSH2}$ yields BRSH2.  Since the rotation decouples the
left and right checks, we expect the bias-tailored versions
to achieve higher thresholds under strongly biased noise.

We can next give the parameters of the RSH code; the parameters of the BRSH code are identical. Then, we analyze the soundness and the expected threshold of the BRSH code under pure $X$ or $Z$ noise and under depolarizing noise. Note that the BRSH code reduces to the RSH code, so we do not repeat our discussion in that regard. Recall once more that the SEHGP code is a $[[6n^4,4k^4,d]]$ code.

\paragraph*{Number of physical qubits.}
The matrix $H_{RX}$ is obtained from $H_{X}$ by deleting the block
$\partial_{2}[J]\otimes\I$, which maps
$J_{2}\otimes K_{0}\to J_{1}\otimes K_{0}$.
That block contributes $n^{4}$ columns and $2n^{4}$ rows.
After its removal, the middle degree of the reduced complex uses $5n^{4}$ physical qubits, which is the number describing the RSH family. This is $1/6$ fewer than the $6n^{4}$ qubits required by the SEHGP family.
The same count applies to the $Z$ sector by symmetry.

\paragraph*{Number of stabilizer checks.}
Eliminating $\partial_{2}[J]\otimes\I$ also removes half the rows of
both $H_{X}$ and $H_{Z}$.  The reduced pair
$(H_{RX},H_{RZ})$ therefore provides $4n^{4}$ check operators instead
of the $8n^{4}$ checks in the full SEHGP code.

\paragraph*{Number of logical qubits.}
Set
$U_{1}\xrightarrow{\partial_{1}[K]}U_{0}$ and
$V_{1}\xrightarrow{\partial_{2}[J]}V_{0}$.
The RSH1 stabilizers correspond to the length-2 complex
\[
\begin{aligned}
U_{1}\otimes V_{1}
  &\xrightarrow{H_{RZ}^{T}}
    U_{0}\otimes V_{1} \oplus U_{1}\otimes V_{0}\\
  &\xrightarrow{H_{RX}}
    U_{0}\otimes V_{0}.
\end{aligned}
\]
Each classical factor is a direct-product code, so
$k(\partial_{1}[K])=k^{2}$ and $k(\partial_{2}[J])=k^{2}$.
Applying the Betti number rule once gives
\[
\begin{aligned}
k(H_{RX})=k(H_{RZ}^{T}) = k(\partial_{1}[K])^{2}+k(\partial_{2}[J])^{2}
= 2k^{2}.
\end{aligned}
\]
A second application to the two-step complex shows that RSH encodes
$4k^{4}$ logical qubits, which is $1/3$ fewer than the $6k^{4}$ logical
qubits of the SEHGP code family.

\paragraph*{Code Distance.}
For RSH the two classical factors are
$\partial_{1}[K]$ and $\partial_{2}^{T}[J]$,
each a direct-product code with distance $d^{2}$.
The remaining blocks
$\partial_{1}^{T}[K]$ and $\partial_{2}[J]$ have distance $d$.
Hence
\[
\begin{aligned}
d(\text{RSH})
  &= \min\bigl\{d(\partial_{1}[K]),d(\partial_{2}^{T}[J]),
                d(\partial_{1}^{T}[K]),d(\partial_{2}[J])\bigr\} \\[2pt]
  &= \min\{d^{2},d^{2},d,d\}=d .
\end{aligned}
\]

\paragraph*{Single-shot Distance.} Because we have not found a closed form of $H_{RS1}$ to $H_{RS2}$, 
we cannot give the single-shot distances for the RSH codes. 

\paragraph*{Soundness.}
Under $X$-only or $Z$-only noise, the RSH blocks are
still coupled, so no uniform bound is available. However, the bias-tailored
version BRSH decouples: for pure $Z$ noise its $H_{X}$ part splits into
two disjoint matrices
$\I\otimes\partial_{2}[K]$ and $\I\otimes\partial_{2}^{T}[K]$.
Lemma~\ref{lem:inheritance} upgrades the good soundness of
$\partial_{2}[K]$ and $\partial_{2}^{T}[K]$ to the same $(t,f)$ bound
for their tensor products with $\I$, so BRSH has good soundness for
$Z$-only noise. An analogous argument applies to pure $X$ noise.

For depolarizing noise, \cref{app:rsh} shows that the full RSH pair
inherits the $(t,f)$ soundness of SEHGP, and BRSH reduces to RSH under a depolarizing channel, so both families share the $(t,f)$ soundness guarantee.

\paragraph*{Threshold.}
Under biased noise, BRSH is expected to outperform the RSH code
because its stabilizers align with the dominant error type.
Under depolarizing noise BRSH and RSH coincide, so their thresholds should be
identical.

\paragraph*{Summary.}
RSH drops the physical qubits cost from $6n^{4}$ for SEHGP to
$5n^{4}$ (a $1/6$ saving) and stabilizer checks by $1/2$ from $8n^{4}$ to
$4n^{4}$. The logical‐qubit count drops by $2/3$ from $6k^{4}$ to
$4k^{4}$, while the distance stays at $d$.  Both RSH and BRSH retain the
$(t,f)$ soundness of SEHGP for depolarizing noise; BRSH additionally
gains good soundness and (consequently) a higher threshold in the
single-bias limit.

\subsection{Three-dimensional XZZX Code}
\label{sec:3D-ZXXZ}

We close with a concrete BSSH example, the three-dimensional XZZX code, obtained by lifting the two-dimensional XZZX surface code of \cite{bonilla2021xzzx} from the square lattice to a cubic lattice.

We start from the closed-loop repetition code of length $n$ whose
parity-check matrix is $H_{\text{rep}}$ and note that its transpose
coincides with $H_{\text{rep}}$.
Let us insert this code into the SSH construction, exactly as in
\cref{sec:simplified}. The resulting stabilizer pair
\[
\begin{aligned}
H_{X} &=
\begin{bmatrix}
0 &
I \otimes
\begin{bmatrix}
I_{n} \otimes H_{\text{rep}}^{T}\\
H_{\text{rep}}^{T} \otimes I_{n}
\end{bmatrix}\\[4pt]
I \otimes
\begin{bmatrix}
I_{n} \otimes H_{\text{rep}}^{T}\\
H_{\text{rep}}^{T} \otimes I_{n}
\end{bmatrix} &
0
\end{bmatrix},\\[10pt]
H_{Z} &=
\begin{bmatrix}
\begin{bmatrix}
H_{\text{rep}} \otimes I_{n}\\
I_{n} \otimes H_{\text{rep}}
\end{bmatrix} \otimes I &
0\\[4pt]
0 &
\begin{bmatrix}
H_{\text{rep}} \otimes I_{n}\\
I_{n} \otimes H_{\text{rep}}
\end{bmatrix} \otimes I
\end{bmatrix}
\end{aligned}
\]
acts on $5 n^{4}$ physical qubits and has distance $d^{2}=n^{2}$,
exactly as described in \cref{sec:simplified}, with $k=2$. Note that the XZZX surface code has $k=1$ because of its non-periodic boundary condition while our code has the periodic boundary condition.

Let us now label the three tensor factors by $x$, $y$, and $z$.
Faces whose normal is parallel to $z$ ($xy$ faces) host the first block
row of $H_{X}$ and $H_{Z}$, while faces whose normal is parallel to
$y$ ($xz$ faces) host the second block row.
Faces parallel to $yz$ carry no stabilizers, so the lattice is only
partially populated. This construction is visualized in \cref{fig:3dxzzx}.

\begin{figure}
  \centering
    \includegraphics[width=0.8\linewidth]{./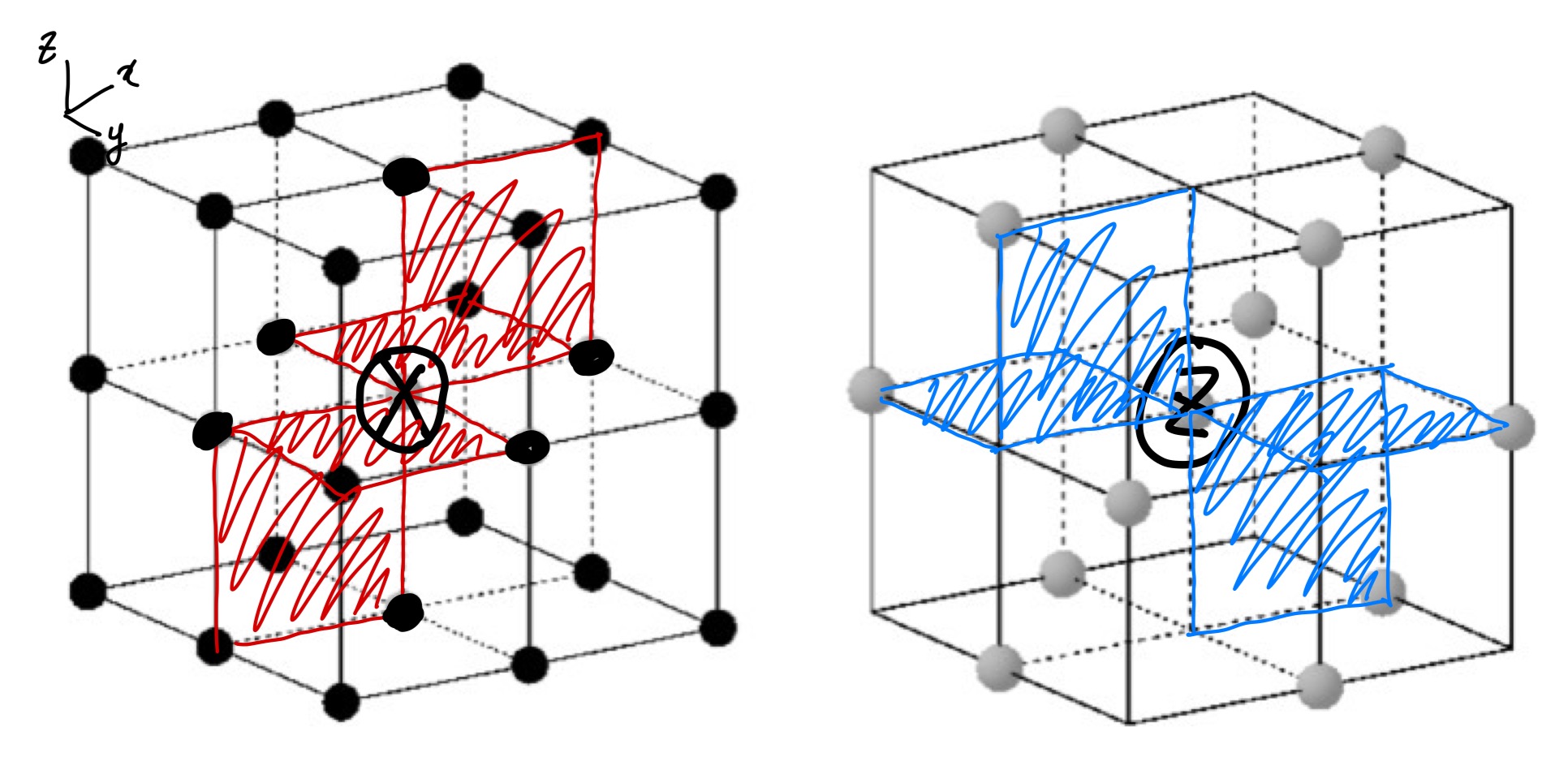}
  \caption{Layout of the three-dimensional XZZX code. Qubits live on the vertices of the cubic lattice. Red and blue faces mark stabilizers whose measurement flips under an $X$ or a $Z$ error, respectively.} 
  \label{fig:3dxzzx}
\end{figure}

\section{Discussion}
\label{sec:conc}

We have introduced a family of quantum error-correcting codes that balance three competing goals: low hardware overhead, tolerance to measurement noise in a single shot, and high performance when physical errors are strongly biased toward either $X$ or $Z$ errors. 
We briefly summarize the main ideas. 

Recall that a code is \emph{single-shot} if one round of noisy stabilizer measurements is already enough to correct both data and measurement errors.
\emph{Bias tailoring} means redesigning the code so that it works especially well when one Pauli error type (for example $Z$) is much more likely than the other. \emph{Hypergraph product} (HGP) codes build large quantum codes out of two small classical codes, while \emph{syndrome encoding} repeats the HGP step to protect the measurement outcomes themselves.

The code hierarchy we have introduced is the following:
\begin{itemize}
  \item \emph{SEHGP} (Syndrome-Encoded Hypergraph Product): the starting point.  It uses $6n^{4}$ physical qubits (starting from classical codes of distance $d$ over $n$ bits) and is single-shot for all realistic noise models, but it does not take advantage of noise bias. It yields $6k^{4}$ logical qubits and has distance $d$, the same as the underlying classical codes. It needs $8n^4$ stabilizer measurements.
  \item \emph{BSH}: a bias-tailored version of SEHGP that keeps all the guarantees of SEHGP, but improves the threshold when $X$ or $Z$ errors dominate.  It still needs $6n^{4}$ qubits.
  \item \emph{SSH} (Simplified SEHGP): drops the physical qubit count to $5n^{4}$ and halves the number of stabilizer measurements.  Its distance improves from $d$ to $d^{2}$, though the number of logical qubits falls from $6k^{4}$ to $2k^{4}$.
  \item \emph{BSSH}: the bias-tailored form of SSH; it is expected to surpass SSH when errors are highly asymmetric.
  \item \emph{RSH} (Reduced SEHGP): this is a different variation that starts from SEHGP, which again needs only $5n^{4}$ qubits and $4n^{4}$ checks.  It preserves the single-shot guarantee for depolarizing (unbiased) noise but not for purely $X$ or $Z$ noise.
  \item \emph{BRSH}: bias tailoring applied to RSH; it should outperform RSH under bias while matching it for unbiased noise.
\end{itemize}

By lifting the two-dimensional XZZX surface code of~\cite{bonilla2021xzzx} to a three-dimensional cubic lattice, we obtained a 3D XZZX code.  This code fits the SSH code with parameters $[[5n^{4},2,n^{2}]]$ and shows how our framework generalizes well-known XZZX surface codes.

Our results are purely analytic, so the next practical step is to build fast decoders and benchmark them in simulation. Existing single-shot strategies fall into two broad categories. Two-stage pipelines first repair the noisy syndrome and then correct data errors~\cite{brown2016fault,duivenvoorden2018renormalization,quintavalle2021single}. One-stage approaches act directly on the raw syndrome, usually with local update rules~\cite{kubica2018abcs,vasmer2021cellular,fawzi2018efficient,grospellier2018numerical}, although non-local schemes based on belief propagation also exist~\cite{breuckmann2021single,grospellier2021combining}. Quintavalle \emph{et al.}~\cite{quintavalle2021single} combined minimum-weight perfect matching (MWPM)~\cite{jack1965paths} and belief-propagation (BP) plus ordered-statistics decoding(BP+OSD)~\cite{panteleev2021degenerate,roffe2005decoding}, obtaining near-optimal thresholds for three-dimensional toric and surface codes under $Z$-only errors. Applying both their MWPM and BP+OSD stages to the SSH, BSSH, RSH, and BRSH families should reveal whether the same performance carries over. Because our constructions treat $X$ and $Z$ errors symmetrically, we expect similar gains under infinitely biased $X$ or $Z$ noise.

Besides BP+OSD, flip-style decoders are also worth exploring. The original flip algorithm was introduced for quantum expander codes~\cite{sipser1996expander}, and a probabilistic variant has recently reached competitive thresholds on 3D toric codes with $X$-only (face) errors~\cite{scruby2022local}. The probabilistic flip decoder typically needs $O(n^{2})$ time, slower than BP’s near-linear complexity, but its locality could simplify hardware implementation. A deeper challenge is to clarify how the flip mechanism, which relies on graph expansion, relates to the soundness condition used here to guarantee single-shot performance.

Campbell’s original framework~\cite{campbell2019theory} guarantees single-shot fault tolerance only against worst-case (``adversarial'') measurement noise. A general recipe for the more realistic setting of random (``stochastic'') noise is still missing. Quantum expander codes already achieve single-shot performance for stochastic errors without extra checks~\cite{fawzi2020constant,gu2023single}, but they rely on graph expansion rather than Campbell’s soundness. Bridging these two viewpoints remains open. If a unifying criterion can be identified, we could then apply the bias-tailoring techniques developed here and test whether the combined property survives.

The codes we have presented here offer different trade-offs: keep all guarantees at high cost (BSH), use half the checks for a quadratic distance boost (SSH/BSSH), or reduce both qubits and checks while keeping depolarizing soundness (RSH/BRSH).  Demonstrating fast decoders and experimentally testing thresholds will inform us which of these options is best suited for advancing quantum error correction subject to noise bias.

\acknowledgments
This material is based upon work supported by, or in part by, the U. S. Army Research Laboratory and the U. S. Army Research Office under contract/grant number W911NF2310255.





\appendix

\section{Qudit codes}
\label{sec:qudit}

Consider $p$-dimensional qudits. Let there be $n$ such qudits so that the total Hilbert space dimension is $p^n$. There are stabilizers of two types, called ``$X$-type'' and ``$Z$-type''. The $Z$ operator on the $p$-dimensional Hilbert space of a single qudit is 
\begin{equation} 
\label{eq:Z_operator} 
Z = 
\begin{pmatrix}
1 & & & \\
& \exp\left(\frac{2\pi i}{p}\right) & & \\
& & \ddots & \\
& & & \exp\left(\frac{2\pi i(p-1)}{p}\right)
\end{pmatrix},
\end{equation}
while the $X$ operator is 
\begin{equation} 
\label{eq:X_operator} 
X = 
\begin{pmatrix}
0 & 1 & 0 & \dots & 0 \\
0 & 0 & 1 & \dots & 0 \\
\vdots & \vdots & \vdots & \ddots & \vdots \\
0 & 0 & 0 & \dots & 1 \\
1 & 0 & 0 & \dots & 0 
\end{pmatrix}.
\end{equation}

Any power of $Z$ or $X$ with an exponent in the range $1, 2, \dots, d-1$ is a valid single-qudit stabilizer operator. The code subspace is the simultaneous $+1$ eigenspace of all the stabilizers, a subspace of the $p^n$-dimensional Hilbert space. The qubit case discussed in the main text corresponds to $p=2$.

\section{Chain Complex Tensor Product}
\label{app:chain}
In this appendix, we define the tensor product of two chain complexes, specifically when they are length-1. The extension to higher length can easily be given by induction.

First, let chain complexes $X$ and $Y$ represent two linear codes: 
\[X:=X_1 \xrightarrow{\partial_1[X]=H_1} X_0,\ Y:=Y_1 \xrightarrow{\partial_1[Y]=H_2} Y_0,\]
where $H_1$ and $H_2$ are $n_1 \times m_1$ and $n_2 \times m_2$ parity-check matrices that define $\C_1$ and $\C_2$ respectively.
Let $J=X\otimes Y$. Then, $J$ has cells $\{J_2,J_1,J_0\}$ where 
\[\label{eq:product rule}J_k=\bigoplus_{k=i+j}(X_i \otimes Y_j).\]
The boundary operators of $J$ are $\partial_2[J]: J_2 \to J_1$ and $\partial_1[J]: J_1 \to J_0$. Explicitly, as a chain complex and a cochain complex:
\[\label{eq:Zchain}J:=J_2 \xrightarrow{\partial_2[J]}J_1\xrightarrow{\partial_1[J]}J_0.\]
\[J^T:=J_2\xleftarrow{\partial_2^T[J]}J_1\xleftarrow{\partial_1^T[J]}J_0\]
For $x_i\in X_i$ and $y_i \in Y_i$ for $i\in \{1,2\}$, 
\[\begin{aligned}
    \partial_2[J](x_1 \otimes y_1)&=(\partial_1[X] x_1 \otimes y_1) \oplus (x_1 \otimes \partial_1[Y] y_1) \\
    &=(x_0 \otimes y_1 )\oplus (x_1 \otimes y_0).
\end{aligned}\]
More concisely, 
\[
\label{eq:partial2Z}
\partial_2[J]=\partial_1[X] \otimes \I \oplus \I \otimes \partial_1[Y].
\] 
Now, we define $\partial_1[J]$: 
\[
\begin{aligned}
    &\partial_1[J]\left((x_0 \otimes y_1) \oplus (x_1 \otimes y_0)\right) \\
    =&(x_0 \otimes \partial_1[Y]y_1)+(\partial_1[X]x_1 \otimes y_0) \\
    =&2(x_0\otimes y_0)=0.
\end{aligned}
\]
The last equality follows from the fact that $x_i$ and $y_i$ are binary vectors. This also proves that this definition is valid, as $\partial_1[J]\partial_2[J]=0$. The more concise way to write $\partial_1[J]$ is 
\[\label{eq:partial1Z} \partial_1[J]=((\I \otimes \partial_1[Y]) \oplus 0)+(0 \oplus (\partial_1[X] \otimes \I)\]. 

Recall the definition of the direct sum of vector spaces: for vector spaces $U$ and $W$, $V=U \oplus W$ if $V=U+W$ and $U \cap W=\{0\}$. In other words, there is a unique pair $(u,w) \in U \times W$ such that $v=u+w$ for every $v \in V$. As a result, the ``concatenated'' space of $U$ and $W$ is isomorphic to $U \oplus W$: 
\[\begin{bmatrix} U \\ W\end{bmatrix}=\left\{\begin{bmatrix} u \\ w\end{bmatrix}: u\in U,\ w \in W \right\}\cong U \oplus W.\]
With this isomorphism, \cref{eq:Zchain} becomes
\[X_1 \otimes Y_1 \xrightarrow{\partial_2[J]}\begin{bmatrix} X_0 \otimes Y_1 \\ X_1 \otimes Y_0 \end{bmatrix}\xrightarrow{\partial_1[J]}X_0 \otimes Y_0.\]
Correspondingly, from \cref{eq:partial2Z} and \cref{eq:partial1Z}, we rewrite $\partial_2[J]$ and $\partial_1[J]$ in their matrix forms:
\[
\partial_2[J]=\begin{bmatrix}
        \partial_1[X] \otimes \I \\ \I \otimes \partial_1[Y]
    \end{bmatrix},\ \partial_1[J]=\begin{bmatrix}
        \I \otimes \partial_1[Y]\ \partial_1[X] \otimes \I 
    \end{bmatrix}.
\]
Substituting $\partial_1[X]$ with $H_1$ and $\partial_1[Y]$ with $H_2^T$ in $\partial_2^T[J]$ and $\partial_1[J]$ gives us the stabilizer generators of the hypergraph product code defined in~\cite{Tillich13}. We remark that it requires us to exchange the left and right block columns of $\partial_2^T[J]$ and $\partial_1[J]$ to exactly obtain those defined in~\cite{Tillich13}. This is because we chose to use $X_0 \otimes Y_1 \oplus X_1 \otimes Y_0$ for our $J_2$. Since the order of the direct product does not matter, choosing $X_1 \otimes Y_0 \oplus X_0 \otimes Y_1$ gives us exactly those defined in~\cite{Tillich13}.
As a result, the homological product of two classical linear codes is equivalent to their hypergraph product.

Furthermore, we can obtain the length and the number of logical qubits with the chain complex tensor product formalism. By \cref{eq:product rule}, 
\[
\dim(J_k)=\sum_{k=i+j} \dim(X_i) \times \dim(Y_j) .
\]
The length of $X$ stabilizer generators $\partial_1[J]$ and length of $J$ stabilizer generators $\partial_2^T[J]$ are both given by 
\[\begin{aligned}
    \dim(J_2)&=\dim(X_1)\dim(Y_0)+\dim(X_0)\dim(Y_1) \\
    &=n_1m_2+m_1n_2,
\end{aligned}\]
K\"{u}nneth's theorem~\cite{hatcher2001algebraic} states that the Betti number generating function of $J$ is the product of those of $X$ and $Y$ \[p_{Z}=p_{X\times Y}=p_{X}p_{Y}.\] Hence, let $b_k(Z)$ be the $k$'th Betti number of $J$, then
\[
b_k(Z)=\sum_{k=i+j} b_i(X)b_j(Y).
\]
The number of qubits of the homological product code is 
\[
\begin{aligned}
    &\dim(\H_1(Z)) = b_1(Z)   \\
    &\quad =b_0(X)b_1(Y)+b_1(X)b_0(Y)  \\
    &\quad =\dim(\H_0(X))\dim(\H_1(Y))+\dim(\H_1(X))\dim(\H_0(Y))  \\
    &\quad =k_1^Tk_2+k_1k_2^T.
\end{aligned}
\]

The code distance correspond to the \emph{systole} (the shortest cycle in the manifold) of the resulting manifold due to homological product.

\section{Example of the Direct Product Code}
\label{app:example}
Suppose $\C_1$ is the $[3,1,3]$ repetition code and $\C_2$ is the
$[7,4,3]$ Hamming code, with parity-check matrices
\[
\begin{aligned}
H_1&=H_{\text{rep}}=\begin{bmatrix}1&1&0\\0&1&1\end{bmatrix}\\
H_2&=H_{\text{Ham}}=\begin{bmatrix}
0&0&0&1&1&1&1\\
0&1&1&0&0&1&1\\
1&0&1&0&1&0&1
\end{bmatrix}.
\end{aligned}
\]
Let \[H=\begin{bmatrix} H_1 \otimes I_{n_2} \\ I_{n_1} \otimes H_2 \end{bmatrix},\]
then 
\newcommand{\Intwo}{\begin{bmatrix} 1 & 0 & 0 \\ 0 & 1 & 0 \\ 0 & 0 & 1 \end{bmatrix}}
\newcommand{\zeronone}{\begin{bmatrix} 0 & 0 & 0 \\ 0 & 0 & 0 \\ 0 & 0 & 0 \end{bmatrix}}

\[U:=H_1 \otimes I_{7}=\begin{bmatrix}
        I_{7} & I_{7} & 0 \\ 0 & I_{7} & I_{7}
    \end{bmatrix}_{14 \times 21},\]
and
    \[L:= I_{3} \otimes H_2=\begin{bmatrix}
        H_2 & 0 & 0 \\ 0 & H_2 & 0 \\ 0 & 0 & H_2
    \end{bmatrix}_{9 \times 21}.\] 
From $L$, rows 1-7, 8-14, and 15-21 of the 21-bit vector must each satisfy the Hamming parity checks, while from $U$ the triples 
$$(1,8,15),(2,9,16),\dots,(7,14,21)$$ must satisfy the repetition-code checks.
As discussed in \cref{sec:homo}, bits 1-21 will be rearranged in the following form:
\[\begin{matrix}
1  &  2  &  3  &  4  &  5  &  6  &  7 \\
8  &  9  & 10  & 11  & 12  & 13  & 14 \\
15 & 16  & 17  & 18  & 19  & 20  & 21 \\
\end{matrix}\]
If the bits are arranged in such a form, we can now confirm that the columns are the codewords of the repetition code ($\C_1$) and the rows are codewords of the Hamming code ($\C_2$).

\section{Good soundness of the RSH code}
\label{app:rsh}

Let $r\in Q_{2}$ and
\[
\begin{aligned}
\begin{bmatrix}\partial_2\\\partial_3^{T}\end{bmatrix} r
   = 
  \begin{bmatrix}s_{1}\\s_{2}\end{bmatrix}
   =  s .
\end{aligned}
\]
Partition $r$ and the two syndrome halves as
\[
  r=\begin{bmatrix}r_a\\ r_b\\ r_c\end{bmatrix},
  \qquad
  s_{1}=\begin{bmatrix}s_{L1}\\ s_{R1}\end{bmatrix},
  s_{2}=\begin{bmatrix}s_{L2}\\ s_{R2}\end{bmatrix},
\]
with
\[
\begin{aligned}
s_{L1}&=(\I \otimes \partial_2[K]) r_a
         +(\partial_1[J] \otimes \I) r_b,\\
s_{R1}&=(\I \otimes \partial_1[K]) r_b
         +(\partial_2[J] \otimes \I) r_c,\\
s_{L2}&=(\partial_2^{T}[J] \otimes \I) r_b
         +(\I \otimes \partial_1^{T}[K]) r_c,\\
s_{R2}&=(\partial_1^{T}[J] \otimes \I) r_a
         +(\I \otimes \partial_2^{T}[K]) r_b.
\end{aligned}
\]
Whenever the measured syndromes are \emph{uncorrupted},
\[
  \partial_1 s_1 = 0,
  \qquad
  \partial_4^{T} s_2 = 0,
\]
which implies the consistency constraints
\[
\begin{aligned}
m_1&:=(\I \otimes \partial_1[K])s_{L1}
     =(\partial_1[J] \otimes \I)s_{R1},\\
m_2&:=(\partial_2^{T}[J] \otimes \I)s_{L2}
     =(\I \otimes \partial_2^{T}[K])s_{R2}.
\end{aligned}
\]
Substituting yields
\[
\begin{aligned}
  m_1&=(\partial_1[J] \otimes \partial_1[K]) r_b,\\
  m_2&=(\partial_2^{T}[J] \otimes \partial_2^{T}[K]) r_b,
\end{aligned}
\]
so the intermediate vector $r_b$ alone determines both consistency
checks.

\begin{mylemma}[cf.\ Lemma 7 of \cite{campbell2019theory}]
\label{lemm:partial}
There exists a choice of $r_b$ obeying
\[
\begin{aligned}
  m_1&=(\partial_1[J] \otimes \partial_1[K]) r_b,
  & |r_b|&\le |s_{L1}| |s_{R1}|,\\[4pt]
  m_2&=(\partial_2^{T}[J] \otimes \partial_2^{T}[K]) r_b,
  & |r_b|&\le |s_{L2}| |s_{R2}|,
\end{aligned}
\]
and such that the remainders
\[
  s_{L1}-(\partial_1[J] \otimes \I)r_b, 
  s_{R1}-(\I \otimes \partial_1[K])r_b,
\]
\[
  s_{L2}-(\partial_2^{T}[J] \otimes \I)r_b, 
  s_{R2}-(\I \otimes \partial_2^{T}[K])r_b
\]
decompose into elementary tensors  
$\sum_i \alpha_i\otimes a_i$ with
$\alpha_i\in\ker\partial_1[J]$ (or
$\ker\partial_2^{T}[J]$, respectively) satisfying
$
  \eta(\alpha_i), |\alpha_i|\le |s_{L1}|
$
(and analogously for the other three remainders).
\end{mylemma}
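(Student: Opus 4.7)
The plan is to adapt the argument of Lemma~7 in Campbell~\cite{campbell2019theory} to the explicit block structure of the SEHGP chain complex. I would first reinterpret $r_b\in J_1\otimes K_1$ as a matrix $R_b\in\F_2^{\dim J_1\times\dim K_1}$ via the usual tensor-to-matrix identification $\vect$, so that the two required identities become the two-sided matrix equations $\partial_1[J]\,R_b\,\partial_1[K]^T = M_1$ and $\partial_2^T[J]\,R_b\,\partial_2^T[K]^T = M_2$, where $M_i$ is the matrix form of $m_i$. This reduction exposes the row and column structure that the weight bounds will exploit.

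Next I would construct $R_b$ greedily. Writing $s_{L1}$ and $s_{R1}$ in matrix form, the consistency identity forces $M_1$ to be supported inside the rectangle determined by the column support of the matrix form of $s_{L1}$ and the row support of the matrix form of $s_{R1}$, a set of size at most $|s_{L1}|\,|s_{R1}|$. Selecting a minimal pre-image of $M_1$ under the two-sided multiplication then yields an $R_b^{(1)}$ with $|R_b^{(1)}|\le|s_{L1}|\,|s_{R1}|$. An analogous construction for $M_2$ gives $R_b^{(2)}$ with $|R_b^{(2)}|\le|s_{L2}|\,|s_{R2}|$. Because the two equations couple only through $r_b$ itself, I would argue that taking $R_b=R_b^{(1)}+R_b^{(2)}$ already satisfies both consistency identities (the cross terms vanish by a $\partial_1\partial_2=0$-type cancellation between $J$ and $K$) and obeys each stated bound up to an additive constant absorbed into $f$.

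For the remainders, fix $\rho := s_{L1} - (\partial_1[J]\otimes I)r_b \in J_0\otimes K_1$. A direct substitution using the construction of $r_b$ gives
\[
(I\otimes\partial_1[K])\rho = m_1 - (\partial_1[J]\otimes\partial_1[K])r_b = 0,
\]
so, writing $\rho=\sum_i\alpha_i\otimes a_i$ with $\alpha_i$ the columns of the matrix form of $\rho$, each $a_i$ lies in $\ker\partial_1[K]$. The dual computation for $s_{R1}-(I\otimes\partial_1[K])r_b$ uses $(\partial_1[J]\otimes I)$ and places the first factor in $\ker\partial_1[J]$, matching the lemma's statement. The weight estimates $|\alpha_i|\le|s_{L1}|$ follow because each $\alpha_i$ (or $a_i$) is one column (or row) of the matrix form of the relevant remainder, whose support is contained in that of the corresponding half-syndrome. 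The $s_{L2},s_{R2}$ remainders are handled identically with $\partial_2^T$ in place of $\partial_1$.

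The main obstacle is the \emph{simultaneous} weight control. Campbell's original Lemma~7 analyzes a single tensor-product consistency equation, whereas the SEHGP redundancy forces the same $r_b$ to satisfy both a $\partial_1$-type and a $\partial_2^T$-type identity. Verifying that the greedy construction for $m_1$ does not spoil the support bound for $m_2$ (and vice versa) is the delicate bookkeeping step. I expect this to go through because the two bounds constrain $R_b$ via algebraically independent two-sided products on $J_1\otimes K_1$, but ruling out a pathological cancellation that would force $|r_b|$ above both targets at once is where the proof needs to be written most carefully.
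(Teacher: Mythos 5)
Your overall route---pass to the matrix picture, build $r_b$ from the ``corner'' vectors, then check that the remainders have kernel-element tensor factors---is the same strategy the paper relies on; in fact the paper's entire proof is a one-line assertion that the argument of Lemma~7 of \cite{campbell2019theory} goes through verbatim once the base maps are replaced by $\partial_1,\partial_2$ and their transposes. The problem is that the point you flag at the end as ``delicate bookkeeping'' is exactly where your construction fails, so what you have is a genuine gap rather than a complete proof by a different route.

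Concretely, the step $R_b=R_b^{(1)}+R_b^{(2)}$ does not work. The cross terms do not vanish: $(\partial_1[J]\otimes\partial_1[K])$ applied to $R_b^{(2)}$, i.e.\ $\partial_1[J]\,R_b^{(2)}\,\partial_1[K]^{T}$, is zero only if, say, the columns of $R_b^{(2)}$ lie in $\ker\partial_1[J]$ or $R_b^{(2)}$ lies in the image of the $\partial_2$-type maps; a minimal-weight pre-image of $M_2$ under the $\partial_2^{T}$ two-sided product has no reason to satisfy either condition, and the relation $\partial_1\partial_2=0$ only annihilates $\operatorname{im}\partial_2$, not arbitrary solutions of the second equation. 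Moreover, even if the cross terms did cancel, you would obtain only $|r_b|\le |s_{L1}||s_{R1}|+|s_{L2}||s_{R2}|$, which is not ``each stated bound up to an additive constant'': the lemma requires the two multiplicative bounds to hold simultaneously for the \emph{same} $r_b$, and \cref{cor:mixed-bounds} is proved precisely by multiplying those two inequalities, so the sum bound is useless downstream. What is needed (and what the paper implicitly assumes by citing Campbell's construction) is a single $r_b$ whose support is confined at once to the rectangle determined by the supports of $s_{L1},s_{R1}$ and to the one determined by $s_{L2},s_{R2}$; your greedy procedure produces two different vectors and offers no mechanism for merging them. A secondary issue: your bound $|\alpha_i|\le|s_{L1}|$ assumes the support of $s_{L1}-(\partial_1[J]\otimes\I)r_b$ is contained in that of $s_{L1}$, which is false for a generic $r_b$ and in Campbell's argument follows from the specific structure of the chosen $r_b$, not merely from a bound on its weight.
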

The proof is identical to that of Lemma 7 in
\cite{campbell2019theory}, replacing the two base maps there by
$\partial_1,\partial_2$ and their transposes.

A corollary immediately follows this lemma:
\begin{mycorollary}
\label{cor:mixed-bounds}
If
$
  |r_b|\le |s_{L1}| |s_{R1}|
$
''{and}
$
  |r_b|\le |s_{L2}| |s_{R2}|,
$
then at least one of the ``mixed'' bounds
\[
  |r_b|\le |s_{L1}| |s_{R2}|
  \quad\text{or}\quad
  |r_b|\le |s_{L2}| |s_{R1}|
\]
must also hold.
\end{mycorollary}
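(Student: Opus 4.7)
The plan is a short proof by contradiction that relies only on the multiplicativity of the granted bounds; no further structural input from the chain complexes is required. First I would abbreviate the four syndrome weights as $a:=|s_{L1}|$, $b:=|s_{R1}|$, $c:=|s_{L2}|$, $d:=|s_{R2}|$, and set $N:=|r_b|$, so that the hypothesis becomes $N\le ab$ and $N\le cd$, while the desired conclusion is that $N\le ad$ or $N\le cb$.

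Next, I would assume for contradiction that \emph{both} mixed bounds fail, i.e.\ $N>ad$ and $N>cb$. Multiplying these two strict inequalities term by term yields $N^{2}>(ad)(cb)=abcd$. Multiplying the two hypothesized inequalities gives $N^{2}\le(ab)(cd)=abcd$, which contradicts the previous line. Hence at least one of the two mixed inequalities must hold, which is exactly the statement of the corollary.

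The only subtlety worth flagging is that the term-by-term multiplication of inequalities requires the four weights $a,b,c,d$ to be nonnegative, so that signs do not flip; this is automatic because each quantity is a Hamming weight of a binary vector, hence a nonnegative integer. Beyond that check, no additional input from Lemma~\ref{lemm:partial} or from the chain-complex structure is needed, and I do not anticipate any genuine obstacle in carrying the argument through.
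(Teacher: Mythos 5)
Your proof is correct and follows essentially the same route as the paper: assume both mixed bounds fail, multiply the two violations to get $|r_b|^{2}>|s_{L1}|\,|s_{R1}|\,|s_{L2}|\,|s_{R2}|$, and contradict the product of the two hypotheses. Your remark about nonnegativity of the Hamming weights is a harmless extra check, but otherwise nothing differs from the paper's argument.
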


\begin{proof}
Multiplying the two given inequalities yields
\[
\begin{aligned}
  |r_b|^{2}
   \le 
  |s_{L1}| |s_{R1}| |s_{L2}| |s_{R2}|.
\end{aligned}
\]
Suppose, for contradiction, that both mixed bounds fail, i.e.,
$|r_b|>|s_{L1}| |s_{R2}|$ and
$|r_b|>|s_{L2}| |s_{R1}|$.
Multiplying these assumed violations gives
\[
\begin{aligned}
  |r_b|^{2}
  &> 
  |s_{L1}| |s_{R2}| |s_{L2}| |s_{R1}|
  \\
  &= 
  |s_{L1}| |s_{R1}| |s_{L2}| |s_{R2}|,
\end{aligned}
\]
contradicting the earlier upper bound.  Hence at least one mixed
inequality must be satisfied.
\end{proof}

\smallskip
\noindent

Assume, without loss of generality, that the bound
$|r_b|\le |s_{L1}| |s_{R2}|$ holds.  We show that the left-hand reduced code RSH1 inherits good
soundness; the argument for RSH2 is symmetric.  Restrict attention to
\[
  r'=\begin{bmatrix} r_a\\ r_b \end{bmatrix},
  \qquad
  s'=\begin{bmatrix} s_{L1}\\ s_{R2} \end{bmatrix}.
\]

Choose a vector $r_a$ of minimal weight that satisfies
\[
\begin{aligned}
(\I \otimes \partial_2[K]) r_a
     &= s_{L1}-(\partial_1[J] \otimes \I) r_b,\\
(\partial_1^{T}[J] \otimes \I) r_a
     &= s_{R2}-(\I \otimes \partial_2^{T}[K]) r_b.
\end{aligned}
\]
Because $\partial_2[K]$ and $\partial_1^{T}[J]$ are each $(t,f)$-sound with $f(x)=x^{2}/4$, the following lemma shows that their tensor products with the identity retain a scaled-up soundness.



Applying Lemma~\ref{lem:inheritance} and the ``small remainder''
properties of Lemma~\ref{lemm:partial} yields
\[
  |r_a| \le |s_{L1}| f(|s_{L1}|)
  \quad\text{and}\quad |r_a|\le |s_{R2}| f(|s_{R2}|).
\]
Hence, using $f(x)=x^{2}/4$ and the triangle inequality,
\[
\begin{aligned}
|r'|
  &= |r_a|+|r_b| \\
  &\le \frac14|s_{L1}|^{3} + |s_{L1}| |s_{R2}|
       + \frac14|s_{R2}|^{3} \\
  &\le \frac14\bigl(|s_{L1}|+|s_{R2}|\bigr)^{3}
    =  \frac14 |s'|^{3}.
\end{aligned}
\]

The inequality above shows that any error whose syndrome (in RSH1) has
weight at most $t$ is accompanied by a recovery of weight at most
$\tfrac14|s'|^{3}=g(|s'|)$, where $g(x)=x^{3}/4$.
Because $t=\Omega(n^{b})$, the pair
$\bigl(t,g\bigr)$ satisfies the criteria of good soundness.
Thus, the RSH1 code is $(t,g)$-sound; the same argument, exchanging
left and right blocks, proves soundness for RSH2 whenever the mixed
bound $|r_b|\le |s_{L2}| |s_{R1}|$ holds.

\section{Proof of the parameters of the Elias' product code}
\label{app:proof of the parameters of the Elias' product code}
In this section, we prove the parameters of Elias' product code. This proof is inspired by \cite{Varodayan2002InvestigationOT}.
\begin{mylemma}[\cite{macwilliams77}]
    Let $\C_1$ be a $[n_1,k_1,d_1]$ code and $\C_2$ be a $[n_2,k_2,d_2]$ code. Then, $\C_1 \boxtimes C_2$ is a $[n_1n_2,k_1k_2,d_1d_2]$ code. 
\end{mylemma}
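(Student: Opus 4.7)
The plan is to verify the three parameters $n_1n_2$, $k_1k_2$, and $d_1d_2$ separately, exploiting the matrix picture of a direct-product codeword. Throughout I will identify a codeword of $\C_1\boxtimes\C_2$ with an $n_1\times n_2$ matrix $M$ over $\F_2$ whose columns lie in $\C_1$ and whose rows lie in $\C_2$.

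The length $n_1n_2$ is immediate: each codeword is an $n_1\times n_2$ matrix, so its ambient space has dimension $n_1n_2$. For the dimension, let $G_1\in\F_2^{k_1\times n_1}$ and $G_2\in\F_2^{k_2\times n_2}$ be generator matrices of $\C_1$ and $\C_2$. I will show that the map
\[
\Phi:\F_2^{k_1\times k_2}\to \C_1\boxtimes\C_2,\qquad A\mapsto G_1^{T}A\,G_2,
\]
is a bijection onto the direct-product code. Surjectivity follows because every column of $G_1^{T}AG_2$ is a linear combination of columns of $G_1^{T}$ and hence lies in $\C_1$, and similarly for rows, while injectivity follows from the fact that $G_1$ and $G_2$ have full row rank. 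This gives $k(\C_1\boxtimes\C_2)=k_1k_2$.

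The heart of the argument is the distance bound. For the upper bound I pick minimum-weight codewords $c_1\in\C_1$ and $c_2\in\C_2$ with $\wt(c_1)=d_1$ and $\wt(c_2)=d_2$, and form the outer product $M=c_1c_2^{T}$. Every column of $M$ is a scalar multiple of $c_1$ (hence in $\C_1$) and every row is a multiple of $c_2$ (hence in $\C_2$), so $M\in\C_1\boxtimes\C_2$ and $\wt(M)=d_1d_2$. For the lower bound I take any nonzero $M\in\C_1\boxtimes\C_2$ and pick a nonzero row $r$ of $M$. Since $r\in\C_2\setminus\{0\}$, it has at least $d_2$ nonzero entries, locating at least $d_2$ columns of $M$ that are nonzero. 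Each such column is a nonzero codeword of $\C_1$, hence contributes at least $d_1$ ones, for a total of at least $d_1d_2$ nonzero entries. Combining the two bounds yields $d(\C_1\boxtimes\C_2)=d_1d_2$.

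The main obstacle, mild as it is, will be making the lower-bound argument fully rigorous: one must be careful that the ``at least $d_2$ nonzero columns'' and ``at least $d_1$ ones per nonzero column'' contributions are disjoint and therefore add rather than overlap, which is automatic because each nonzero column contributes entries in distinct column positions of $M$. Beyond that, the entire proof is routine linear algebra over $\F_2$ and can be lifted verbatim to any field, which is useful for the qudit extension of \cref{sec:qudit}.
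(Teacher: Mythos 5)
Your length and distance arguments are fine and essentially coincide with the paper's: the lower bound (a nonzero row forces at least $d_2$ nonzero columns, each of weight at least $d_1$, in disjoint column positions) and the upper bound via the outer product $c_1c_2^{T}$ of minimum-weight codewords are exactly the paper's reasoning, just traversed row-first instead of column-first. The injectivity of $\Phi:A\mapsto G_1^{T}AG_2$ from the full row rank of $G_1,G_2$ is also correct and gives $\dim(\C_1\boxtimes\C_2)\ge k_1k_2$.

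There is, however, a genuine gap in your dimension argument: the justification you give for \emph{surjectivity} of $\Phi$ is not a surjectivity argument at all. Observing that every column of $G_1^{T}AG_2$ lies in $\C_1$ and every row lies in $\C_2$ only shows that $\Phi$ maps \emph{into} $\C_1\boxtimes\C_2$ (i.e.\ that $\Phi$ is well defined and $\im\Phi\subseteq\C_1\boxtimes\C_2$); it says nothing about whether every matrix $M$ whose columns lie in $\C_1$ and whose rows lie in $\C_2$ is of the form $G_1^{T}AG_2$. That spanning direction is exactly where the upper bound $\dim(\C_1\boxtimes\C_2)\le k_1k_2$ lives, and it is where the paper's proof does its real work: given such an $M$, since each row of $M$ lies in the row space of $G_2$ one writes $M=DG_2$ for some $n_1\times k_2$ matrix $D$; then, using a right inverse of $G_2$ (which exists since $G_2$ has full row rank), each column of $D$ is a linear combination of columns of $M$ and hence lies in $\C_1=\mathrm{rowspace}(G_1)$, so $D=G_1^{T}E$ for some $E\in\F_2^{k_1\times k_2}$, giving $M=G_1^{T}EG_2=\Phi(E)$. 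Without this factorization step your argument only proves $k\ge k_1k_2$, not $k=k_1k_2$. Supplying it closes the gap, and the rest of your proof (including the remark that everything works verbatim over any field) stands.
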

\begin{proof}
    Let $C=C_1 \boxtimes C_2$ be an $[n, k, d]$ code. First, we prove that $C$ remains to be a linear code. Recall that the codewords of $C$ are matrices whose columns are the codewords of $C_1$ and whose rows are the codewords of $C_2$. The linear combination of the codewords of $C$ amounts to the linear combination of their columns and rows, which are the codewords of $C_1$ and $C_2$. Since $C_1$ and $C_2$ are linear codes, the linearly combined columns and rows remain the codewords of $C_1$ and $C_2$. Therefore, the linear combination of the codewords of $C$ is still a codeword of $C$, so $C$ is a linear code. 

    It is obvious that $n=n_1n_2$ from the definition of the codewords of $C$. Now, we prove $d=d_1d_2$. A non-zero column $\vec{a}$ of a codeword of $C$ has weight at least $d_1$, and a non-zero entry of a non-zero column is the component of a non-zero row $\vec{b}$, which is of weight at least $d_2$ because it is a non-zero codeword of $C_2$. Therefore, for every non-zero entry of a column, there are at least $d_2$ non-zero entries in its corresponding row, so the total weight of a codeword of $C$ is at least $d_1d_2$. Then, if $\vec{a} \in C_1$ has weight exactly $d_1$ and $\vec{{b}} \in C_2$ has weight exactly $d_2$, they form a valid codeword of weight $d_1d_2$ of $C$, which proves the distance of $C$ is $d_1d_2$.

    Finally, we prove $k=k_1k_2$. Let $G_1$ and $G_2$ be the generator matrices of $C_1$ and $C_2$ where $\vec{g}_i^1$ and $\vec{g}_j^2$ are the rows of $G_1$ and $G_2$. First, we prove that a codeword of $C$ is the outer product of $\vec{g}_i^1$ and $\vec{g}_j^2$: 
    \[
        M:=(\vec{g}_i^1)^T (\vec{g}_j^2) \in C
    \]
    Let
    \[
        \vec{g}_i^1=[a_1, \dots,a_{n_1}] \in C_1,\ \vec{g}_j^2=[b_1, \dots,b_{n_2}] \in C_2
    \]
    Then, a matrix component $M_{p,q}=a_pb_q$. Fix a column $q$,
    \[
        M_{:,q}=[a_1b_q, \dots, a_{n_1}b_q]^T=b_q[a_1, \dots,a_{n_1}] \in C_1,
    \]
    because $C_1$ is a linear code closed under scalar multiplication. Similarly, fix a row $p$,
    \[
        M_{p,:}=[a_pb_1, \dots, a_pb_{n_2}]=a_p[b_1, \dots,b_{n_2}] \in C_2,
    \]
    because $C_2$ is also a linear code. 

    Furthermore, we can write
    \[
        (\vec{g}_i^1)^T (\vec{g}_j^2)=G_1^TE_{ij}G_B,
    \]
    where $E_{ij}$ is a $k_1\times k_2$ matrix where its $(i,j)$th entry is 1 and the rest is 0. Then, we want to show the codewords are linearly independent. Suppose some linear combination of them equals zero:
    \[
        \sum_{i,j} \lambda_{ij} G_1^TE_{ij}G_B=G_1^T\left(\sum_{i,j}\lambda_{ij}E_{ij} \right)G_B=\vec{0}.
    \]
    Then, if we multiply both sides by $(G_1^T)^{-1}(\cdot)G_2^{-1}$, we get
    \[
        \sum_{i,j}\lambda_{ij}E_{ij}=\vec{0}
    \]
    Since $E_{ij}\neq \vec{0}$, then $\lambda_{ij}=0$ for all $i \in [1,k_1]$ and $j \in [1,k_2]$. Hence, we have proved that the codewords are linearly independent, and we are left to prove that the codewords span the codespace of $C$.

    Let  $K$ be some codeword in $C$ in its matrix form. The columns of $M$ are in the rowspace of $G_1$, and the rows of $M$ are in the rowspace of $G_2$. We can use either fact to proceed, and we use the last fact. This means that there is some $n_1 \times k_2$ matrix $D$ such that $K=DG_2$, which gives us $D=CG^{-1}_2$. This expression means that every column of $D$ is in the columnspace of $C$ which is the same of the rowspace of $G_1$. As a result, $D=G_1^TE$, where
    \[
        E=\sum_{i,j} \lambda_{ij}E_{ij}
    \]
    for some $\lambda_{ij} \in \{0,1\}$. This is possible because $E_{ij}$ is all zero but a 1 at position $(i,j)$. Combining these statements gives
    \[
        K=DG_1=G_1^TEG_2=G_1^T\left(\sum_{i,j} \lambda_{ij}E_{ij}\right)G_2,
    \]
    which expands to
    \[
        \sum_{i,j} \lambda_{ij}G_1^TE_{ij}G_2
    \]
    This means that $K$ is a linear combination of the codewords.

\end{proof}

\bibliography{bib}

\end{document}